\newtheorem{thm}{Theorem}[section]
\newtheorem{lemma}[thm]{Lemma}
\newtheorem{prop}[thm]{Proposition}
\newtheorem{lemma*}{Lemma}
\theoremstyle{definition}
\newtheorem{remark}[thm]{Remark}
\newtheorem{example}[thm]{Example}
\newtheorem{definition}[thm]{Definition}
\newcommand{\supp}{\operatorname{supp}}
\newcommand{\CF}{\mathrm{CF}}
\newcommand{\LT}{\mathrm{LT}}
\newcommand{\psm}[2]{x_{#1}\prod_{i\in #2}(1+x_i)}
\newcommand{\psmj}[2]{x_{#1}\prod_{j\in #2}(1+x_j)}
\newcommand{\union}{\bigcup}
\newcommand{\intersect}{\bigcap}
\newcommand{\biring}{\mathbb{F}_2[x_1,\dots,x_n]}
\newcommand{\dfn}[1]{\textbf{#1}}
\newcommand{\bifield}{\{0,1\}^n}
\newcommand{\h}{-1.3}
\newcommand{\gb}{Gr\"{o}bner basis}
\newcommand{\gbs}{Gr\"{o}bner bases}
\title{Gr\"obner Bases of Neural Ideals}
\author[Garcia]{Rebecca Garcia}
\author[Garc\'{\i}a Puente]{Luis David Garc\'{\i}a Puente}
\address[R. Garcia and L.~D. Garc\'{\i}a Puente]{Department of Mathematics and Statistics,
Sam Houston State University,
Huntsville, TX 77341-2206}
\email{rgarcia@shsu.edu}
\email{lgarcia@shsu.edu}
\author[Kruse]{Ryan Kruse}
\address[R. Kruse]{Mathematics Department, Central College,  Pella, IA 50219}
\email{kruser1@central.edu}
\author[Liu]{Jessica Liu}
\address[J. Liu]{Department of Mathematics, Bard College,
Annandale, NY 12504}
\email{yl7847@bard.edu}
\author[Miyata]{Dane Miyata}
\address[D. Miyata]{Department of Mathematics,
Willamette University,
Salem, OR 97301}
\email{dmiyata@willamette.edu}
\author[Petersen]{Ethan Petersen}
\address[E. Petersen]{Department of Mathematics, Rose-Hulman Institute of Technology, Terre Haute, IN 47803}
\email{peterseo@rose-hulman.edu}
\author[Phillipson]{Kaitlyn Phillipson}
\address[K. Phillipson]{Department of Mathematics, St. Edward’s University,
Austin, Texas 78704-6489}
\email{kphillip@stedwards.edu}
\author[Shiu]{Anne Shiu}
\address[A. Shiu]{Department of Mathematics, Texas A\&M University, College Station, TX 77843}
\email{annejls@math.tamu.edu}
\date{\today}
\begin{document}

\begin{abstract}
The brain processes information about the environment via neural codes. 
The neural ideal was introduced recently as an algebraic object that 
can be used to better understand the combinatorial structure of neural
codes.
Every neural ideal has a particular generating set, 
called the canonical form, that directly encodes 
a minimal description of the receptive field structure intrinsic to the neural code. 
On the other hand, for a given monomial order, 
any polynomial ideal is also generated by its unique (reduced) Gr\"obner basis with respect to 
that monomial order.   
How are these two types of generating sets -- canonical forms and Gr\"obner bases -- related?
Our main result states that if the canonical form of a neural ideal is a Gr\"obner basis, then it is the universal Gr\"obner basis (that is, the union of all reduced Gr\"obner bases).  
Furthermore, we prove that this situation -- when the canonical form is a Gr\"obner basis -- occurs precisely when 
the universal Gr\"obner basis contains only pseudo-monomials (certain generalizations of monomials).
Our results motivate two questions: (1)~When is the canonical form a Gr\"obner basis?
(2)~When the universal Gr\"obner basis of a neural ideal is {\em not} a canonical form, what can the non-pseudo-monomial elements in the basis tell us about the 
receptive fields of the code?
We give partial answers to both questions.  Along the way, we develop a representation of pseudo-monomials as hypercubes in a Boolean lattice.

\smallskip
\noindent {\bf Keywords:} neural code, receptive field, canonical
form, Gr\"obner basis, Boolean lattice

\smallskip
\noindent {\bf MSC classes:} 92-04 (Primary), 13P25, 68W30 (Secondary)
\end{abstract}


\maketitle

\section{Introduction}\label{sec:intro}

The brain is tasked with many important functions, but one of the least understood is how it builds an understanding of the world. 
Stimuli in one's environment are not experienced in isolation, but in relation to other stimuli.
How does the brain represent this organization?  
Or, 
to quote from Curto, Itskov, Veliz-Cuba, and Youngs, 
``What can be inferred about the underlying stimulus space from neural activity alone?'' \cite{neural_ring}. 

Curto {\em et al.}\ pursued this question for codes where each neuron has a region 
of stimulus space, called its {\em receptive field}, in which it fires at a high rate.
They introduced algebraic objects that summarize neural-activity data, which are in the form of {\em neural codes} ($0/1$-vectors where $1$ means the corresponding neuron is active, and $0$ means silence)~\cite{neural_ring}.
The {\em neural ideal} of a neural code is an ideal 
that contains the full combinatorial data of the code.
The {\em canonical form} of a neural ideal is a generating set that 
is a minimal description of the receptive-field structure.  Hence, the questions posed above 
have been investigated via the neural ideal or the canonical form~\cite{ what-makes,neural_ring,neural-hom,new-alg}.  As a complement to algebraic approaches, combinatorial and topological arguments are employed in related works~\cite{intersection-complete,sparse,LSW}.


The aim of our work is to investigate, for the first time, how the canonical form is related to
other generating sets of the neural ideal, namely, its Gr\"obner bases.
This is a natural mathematical question, and additionally the answer could improve algorithms for computing the canonical form.
Currently, there are two distinct methods to compute the canonical form of a neural ideal: the original method proposed in \cite{neural_ring} and an iterative method introduced in \cite{neural-ideal-sage}. The former method requires the computation of primary decomposition of pseudo-monomial ideals. As a result, this method is rather inefficient. Even in dimension 5, one  can find codes for which this algorithm takes hundreds or even thousands of seconds to terminate or halts due to lack of memory.
The more recent iterative method relies entirely on basic polynomial arithmetic. This algorithm can efficiently compute canonical forms for codes in up to 10 dimensions; see \cite{neural-ideal-sage}. On the other hand, Gr\"obner basis computations are generally computationally expensive. Nevertheless, we take full advantage of tailored methods for Gr\"obner basis over Boolean rings \cite{polybori}. As we show later in Table \ref{runtimes}, for small dimensions less than or equal to 8, Gr\"obner basis computations are faster than canonical form ones. For larger dimensions, we have observed that in general Gr\"obner basis computations are faster but the standard deviation on computational time is much larger. In dimension 9, the average time to compute a Gr\"obner basis is around 3 seconds, but there are codes for which that computation takes  close to 10 hours to finish. 

Nevertheless, we believe that a thorough study of Gr\"obner basis of neural ideals is not only of theoretical interest, but it can lead to better procedures able to perform computations in larger dimensions. 
Indeed, among small codes, surprisingly many have canonical forms that are also Gr\"obner bases. Moreover, the iterative nature of the newer canonical form algorithm hints towards 
the ability to compute canonical forms and Gr\"obner bases of neural codes in large dimensions by `gluing' those of codes on small dimensions. Such decomposition results are a common theme in other areas of applied algebraic geometry 
\cite{AR08, EKS14}.



The outline of this paper is as follows.  Section~\ref{sec:background} provides background on neural ideals, canonical forms, and Gr\"obner bases.  
In Section~\ref{sec:main}, we prove our main result: if the canonical form of a neural ideal is a Gr\"obner basis, then it is the universal Gr\"obner basis (Theorem~\ref{thm:main}).  
We also prove a partial converse: if the universal Gr\"obner basis of a neural ideal contains only so-called pseudo-monomials, then it is the canonical form (Theorem~\ref{thm:summary}).  
Our results motivate other questions: 
\begin{enumerate}
\item When is the canonical form a Gr\"obner basis?
\item If the universal Gr\"obner basis of a neural ideal is {\em not} a canonical form, what can the non-pseudo-monomial elements in the basis tell us about the receptive fields of the code?
\end{enumerate}
Sections~\ref{sec:gb} and~\ref{sec:new-RF} provide some partial answers these questions. Finally, a discussion is in Section~\ref{sec:discussion}.

\section{Background}\label{sec:background}
This section introduces neural ideals and related topics, which were first defined by Curto, Itskov, Veliz-Cuba, and Youngs~\cite{neural_ring}, and recalls some basics about Gr\"obner bases. We use the notation $[n]:=\{1,2,\dots, n\}$.

\subsection{Neural codes and receptive fields} \label{sec:code}
	A \dfn{neural code} (also known as a \textbf{combinatorial code}) on $n$ neurons is a set of binary firing patterns $C\subset \{0,1\}^n$, that is, a set of binary strings of neural activity. Note that neither timing nor rate of neural activity are recorded 
    in a neural code. 
		
	An element $c\in C$ of a neural code is a \dfn{codeword}.  Equivalently, a codeword is determined by the set of neurons that fire: 
\[	\supp(c):=\{i\in [n] \mid c_i=1\}\subseteq [n]~.\]
	Thus, the entire code is identified with a set of subsets of co-firing neurons: $\supp (C) =\{\supp (c) \mid c\in C\} \subseteq 2^{[n]}.$

In many areas of the brain, neurons are associated with \dfn{receptive fields} in a {\em stimulus space}. 
Of particular interest are the receptive fields of \textit{place cells}, which are neurons that fire in response to an animal's location. More specifically, each place cell is associated with a \dfn{place field}, a convex region of the animal's physical environment where the place cell has a high firing rate~\cite{Oke1}.
 The discovery of place cells and related neurons (grid cells and head direction cells) won neuroscientists John O'Keefe, May Britt Moser, and Edvard Moser the 2014 Nobel Prize in Physiology and Medicine.

Given a collection of 
sets $\mathcal{U} = \{U_1,...,U_n\}$ in a stimulus space $X$ (here $U_i$ is the receptive field of neuron $i$), the {\bf receptive field code}, denoted by $C(\mathcal{U})$, is: 
\[
C(\mathcal{U})~:=~
\left\{c\in \{0,1\}^n ~:~ \left( \bigcap_{i\in \supp(c)} U_i\right) \setminus \left( \bigcup_{j \notin \supp(c)} U_j \right) \neq \emptyset 
\right\}~.
\]
As mentioned earlier, we often identify this code with the corresponding set of subsets of $[n]$.
Also, we use the following convention for the empty intersection: $\bigcap_{i \in \emptyset} U_i := X$.

\begin{example} \label{ex:first-example}
Consider the 
sets $U_i$ in a stimulus space $X$ depicted in Figure~\ref{fig:U}.
The corresponding receptive field code is $C(\mathcal{U})=\{\emptyset, 1,123, 13, 3 \}$.    

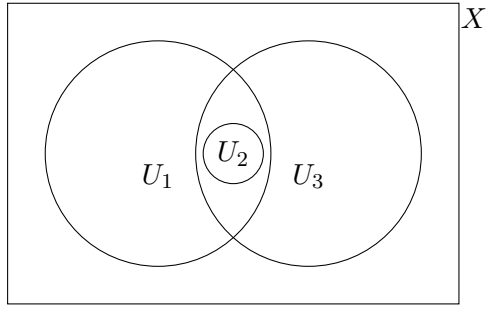
\begin{figure}
\begin{center}
            \def\firstcircle{(0,0) circle (1.5cm)}
            \def\secondcircle{(0:1cm) circle (.4cm)}
            \def\thirdcircle{(0:2cm) circle (1.5cm)}

        \begin{tikzpicture}
            \draw \firstcircle node[below] {$U_{1}$};
            \draw \secondcircle node {$U_{2}$};
            \draw \thirdcircle node [below] {$U_{3}$};
            \draw (-2,-2) rectangle (4,2);
            \node (X) at (4.2,1.8) {$X$};
        \end{tikzpicture}
    \end{center}
\caption{Receptive fields $U_i$ for which the code is $C(\mathcal{U})=\{\emptyset, 1,123, 13, 3 \}$. \label{fig:U}}
\end{figure}
\end{example}

\subsection{The neural ideal and its canonical form} \label{sec:CF}

A {\bf pseudo-monomial} in $\mathbb{F}_2[x_1,\dots, x_n]$ is a polynomial of the form 
\[
f~=~ \prod_{i\in \sigma} x_i\prod_{j\in\tau}(1+x_j)~,
\]
where $\sigma,\tau\subseteq[n]$ with $\sigma\cap\tau=\emptyset$.  Every term in a pseudo-monomial $f= \prod_{i\in \sigma} x_i\prod_{j\in\tau}(1+x_j)$ divides its highest-degree term, $\prod_{i\in \sigma \cup \tau} x_i$.  We will use this fact several times in this work.

Each $v\in \bifield$ defines a pseudo-monomial $\rho_v$ as follows:
\begin{align*}
\rho_v~:=~\prod_{i=1}^n(1-v_i-x_i)=\prod_{\{i \mid v_i=1\}}x_i\prod_{\{j \mid v_j=0\}}(1+x_j)=\prod_{\{i\in \supp(v)\}}x_i\prod_{\{j\not\in\supp(v)\}}(1-x_j)~.
\end{align*}
Notice that $\rho_v$ is the {\bf characteristic function} for $v$, that is, 
$\rho_v(x)=1$ if and only if $x=v$.  

\begin{definition}
Let $C \subseteq \{0,1\}^n$ be a neural code. The \textbf{neural ideal} $J_C$ is the ideal in $\mathbb{F}_2[x_1,\dots, x_n]$ generated by all $\rho_v$ for $v\not\in C$:
	\begin{align*}
    J_C~:=~\langle \{\rho_v|v\not\in C\}\rangle~.
    \end{align*}
\end{definition}

\noindent        
It follows that the variety of the neural ideal is the code itself: $V(J_C)=C$. The following lemma provides the algebraic version of the previous statement:

\begin{lemma}[Curto, Itskov, Veliz-Cuba, and Youngs~{\cite[Lemma 3.2]{neural_ring}}]\label{lemma_3.2}
Let $C \subseteq \{0,1\}^n$ be a neural code. Then 
\[	I(C) ~=~ J_C + \langle x_i(1 + x_i) \mid i \in [n] \rangle~, \]
where $I(C)$ is the ideal of the subset $C \subseteq \{0,1\}^n$.
\end{lemma}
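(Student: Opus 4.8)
The plan is to prove the two inclusions separately; write $B := \langle x_i(1+x_i) \mid i \in [n]\rangle$ for the ``Boolean ideal.'' The inclusion $J_C + B \subseteq I(C)$ is the easy direction: over $\mathbb{F}_2$ each generator $x_i(1+x_i)$ vanishes at every point of $\{0,1\}^n$, hence on $C$, so $B \subseteq I(C)$; and for each $v \notin C$ the polynomial $\rho_v$, being the characteristic function of $v$, vanishes at every $w \neq v$ and in particular at every $c \in C$, so $J_C \subseteq I(C)$.

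For the reverse inclusion $I(C) \subseteq J_C + B$, I would work modulo $B$ using two observations. First, a ``partition of unity'': expanding the product and using $x_i + (1+x_i) = 1$ in $\mathbb{F}_2$ gives the polynomial identity
\[
\sum_{v \in \{0,1\}^n} \rho_v ~=~ \prod_{i=1}^n \bigl( x_i + (1+x_i) \bigr) ~=~ 1 .
\]
Second, a ``localization'' congruence: for each $i$ and each $v \in \{0,1\}^n$ one checks $x_i \rho_v \equiv v_i \rho_v \pmod{B}$ --- if $v_i = 1$ the factor $x_i$ already appears in $\rho_v$ and $x_i^2 \equiv x_i \pmod B$, while if $v_i = 0$ the factor $(1+x_i)$ appears and $x_i(1+x_i) \in B$. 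Multiplying these relations shows $g \rho_v \equiv g(v)\rho_v \pmod{B}$ for every $g \in \mathbb{F}_2[x_1,\dots,x_n]$.

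Combining the two, for any $f \in I(C)$ I would compute
\[
f ~=~ f \cdot \sum_{v \in \{0,1\}^n} \rho_v ~=~ \sum_{v \in \{0,1\}^n} f \rho_v ~\equiv~ \sum_{v \in \{0,1\}^n} f(v)\, \rho_v ~=~ \sum_{v \notin C} f(v)\, \rho_v \pmod{B} ,
\]
where the last equality holds because $f$ vanishes on $C$. The right-hand side is an $\mathbb{F}_2$-linear combination of generators of $J_C$, so $f \in J_C + B$, as desired.

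The argument is essentially routine; the one place that needs care --- and the only place the Boolean relations enter --- is the localization congruence $x_i(1+x_i)\rho_v \equiv 0$, which is exactly why $B$ must be adjoined to $J_C$ (indeed $J_C$ alone does not suffice: when $C = \{0,1\}^n$ we have $J_C = 0 \neq B = I(C)$). An alternative to the partition-of-unity trick would be to first establish that $B = I(\{0,1\}^n)$ via a dimension count (modulo $B$ every polynomial reduces to a squarefree one, so $\dim_{\mathbb{F}_2}\mathbb{F}_2[x_1,\dots,x_n]/B \le 2^n$, which matches $\dim_{\mathbb{F}_2}\mathbb{F}_2^{\{0,1\}^n}$) and then note that the $\rho_v$, $v \in \{0,1\}^n$, descend to the delta-function basis of the function ring; the displayed computation then reads off directly. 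I would prefer the self-contained version, since it avoids this detour.
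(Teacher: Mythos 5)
The paper itself does not prove this lemma; it cites it directly from Curto, Itskov, Veliz-Cuba, and Youngs \cite{neural_ring}, so there is no in-paper argument to compare against. Your proof, however, is correct and complete. The easy inclusion $J_C + B \subseteq I(C)$ is handled properly. For the reverse, the partition-of-unity identity $\sum_{v\in\{0,1\}^n}\rho_v = \prod_i\bigl(x_i + (1+x_i)\bigr) = 1$ is a genuine polynomial identity over $\mathbb{F}_2$ (each summand picks $x_i$ or $1+x_i$ per coordinate, so the sum is exactly the expansion of the product), and the localization congruence $x_i\rho_v \equiv v_i\rho_v \pmod{B}$ is verified correctly in both cases $v_i=1$ and $v_i=0$; extending it multiplicatively and then $\mathbb{F}_2$-linearly to $g\rho_v \equiv g(v)\rho_v$ is routine. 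The final chain $f = \sum_v f\rho_v \equiv \sum_v f(v)\rho_v = \sum_{v\notin C} f(v)\rho_v \pmod{B}$ lands in $J_C + B$ as claimed. Your closing remark is also right: $J_C$ alone does not suffice (the $C = \{0,1\}^n$ example is the cleanest witness), and the alternative route via $B = I(\{0,1\}^n)$ plus a dimension count is a standard equivalent. This is essentially the intended argument behind the cited result, and the self-contained version you chose is the cleaner exposition.
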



Note that the ideal generated by the Boolean relations $\langle x_i(1 + x_i):i\in [n]\rangle$ is contained in $I(C)$, regardless of the structure of $C$.

A pseudo-monomial $f$ in an ideal $J$ in $\biring$ is {\bf minimal} if there does not exist another pseudo-monomial $g\in J$, with $g \neq f$,
such that $f=gh$ for some $h\in \biring$.

\begin{definition}
The \dfn{canonical form} of a neural ideal $J_C$, denoted by ${\rm CF}(J_C)$, 
is the set of all minimal pseudo-monomials of $J_C$. 
\end{definition}

Algorithms for computing the canonical form were given in~\cite{neural_ring,neural-hom,neural-ideal-sage}. In particular, \cite{neural-ideal-sage} describes an iterative method to compute the canonical form that is significantly more efficient than the original method presented in \cite{neural_ring}.

The canonical form ${\rm CF}(J_C)$ is a particular generating set for the neural ideal $J_C$~\cite{neural_ring}.  
The main goal in this work is to compare ${\rm CF}(J_C)$ to other generating sets of $J_C$, namely, its Gr\"obner bases.  

\begin{example} \label{ex:first-example-part-2}
Returning to Example~\ref{ex:first-example}, the codewords $v$ that are {\em not} in $C(\mathcal{U})= \{\emptyset, 1,123, 13, 3 \}$ are $2$, $12$, and $23$, so 
the neural ideal is 
	$J_C=\langle\{ x_2(1+x_1)(1+x_3),~x_1x_2(1+x_3),~x_2x_3(1+x_1)  \}\rangle $.
The canonical form is ${\rm CF}(J_{C(\mathcal{U})} )=\{ x_2(1+x_1),~x_2(1+x_3) \}$.  We will interpret these 
canonical-form polynomials in 
Example~\ref{ex:first-example-part-3} 
below.
\end{example}
    
\subsection{Receptive-field relationships} \label{sec:RF}

It turns out that we can interpret pseudo-monomials in $J_C$ (and thus in the canonical form) 
in terms of relationships among receptive fields. First we need the following notation: for any $\sigma\subseteq [n]$, define:
\[
x_\sigma ~:=~\prod_{i\in\sigma}x_i \quad 
\text{ and } \quad
U_\sigma ~:=~ \intersect_{i\in\sigma} U_i~,
\]
where, by convention, the empty intersection is the entire space $X$. 
\begin{lemma}[Curto, Itskov, Veliz-Cuba, and Youngs~{\cite[Lemma 4.2]{neural_ring}}] \label{lem:receptivefields}
	Let $X$ be a stimulus space, let $\mathcal{U}=\{U_i\}_{i=1}^n$ be a collection of 
    sets in $X$, and consider the receptive field code $C=C(\mathcal{U})$. Then for any pair of subsets $\sigma,\tau\subseteq [n]$, 
	$$x_\sigma\prod_{i\in\tau}(1+x_i)\in J_C \iff U_\sigma\subseteq\union_{i\in\tau}U_i~.$$
\end{lemma}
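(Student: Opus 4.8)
The plan is to factor the claimed equivalence through an intermediate, purely combinatorial condition on codewords, and then translate that condition geometrically. Write $f := x_\sigma\prod_{i\in\tau}(1+x_i)$ and assume, as the statement intends, that $x_\sigma\prod_{i\in\tau}(1+x_i)$ is a pseudo-monomial, i.e.\ $\sigma\cap\tau=\emptyset$. Let
\[
S_f~:=~\{\,v\in\bifield : \sigma\subseteq\supp(v)\ \text{and}\ \supp(v)\cap\tau=\emptyset\,\},
\]
which is exactly the set of points of $\bifield$ at which $f$ evaluates to $1$. I will establish two facts: (i) $f\in J_C\iff S_f\cap C=\emptyset$; and (ii) when $C=C(\mathcal{U})$, $\ S_f\cap C=\emptyset\iff U_\sigma\subseteq\union_{i\in\tau}U_i$. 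Chaining (i) and (ii) proves the lemma.

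For step (i) the key ingredient is the polynomial identity
\[
f~=~\sum_{v\in S_f}\rho_v\qquad\text{in }\biring,
\]
obtained by pulling the common factor $x_\sigma\prod_{i\in\tau}(1+x_i)$ out of each $\rho_v$ with $v\in S_f$; what remains is $\sum_w\prod_{k:w_k=1}x_k\prod_{k:w_k=0}(1+x_k)$ taken over all $w\in\{0,1\}^{[n]\setminus(\sigma\cup\tau)}$, and this expands to $\prod_k\bigl(x_k+(1+x_k)\bigr)=\prod_k 1=1$ over $\F_2$. Given the identity, the implication ``$S_f\cap C=\emptyset\Rightarrow f\in J_C$'' is immediate, since then every $\rho_v$ occurring in the sum is a generator of $J_C$. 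For the converse, each generator $\rho_v$ of $J_C$ (with $v\notin C$) vanishes on $C$ because $\rho_v$ is the characteristic function of $v$; hence $J_C\subseteq I(C)$, so any element of $J_C$ vanishes on $C$, while $f$ takes the value $1$ at every point of $S_f$, which forces $S_f\cap C=\emptyset$.

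For step (ii), assume first $U_\sigma\subseteq\union_{i\in\tau}U_i$ and fix $v\in S_f$. From $\sigma\subseteq\supp(v)$ we get $\intersect_{i\in\supp(v)}U_i\subseteq U_\sigma$, and from $\supp(v)\cap\tau=\emptyset$ we get $\union_{i\in\tau}U_i\subseteq\union_{j\notin\supp(v)}U_j$; combining these with the hypothesis gives $\bigl(\intersect_{i\in\supp(v)}U_i\bigr)\setminus\bigl(\union_{j\notin\supp(v)}U_j\bigr)=\emptyset$, i.e.\ $v\notin C(\mathcal{U})$. Hence $S_f\cap C(\mathcal{U})=\emptyset$. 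Conversely, if $U_\sigma\not\subseteq\union_{i\in\tau}U_i$, choose $p\in U_\sigma\setminus\union_{i\in\tau}U_i$ and let $v$ be the codeword of $p$, that is, $\supp(v):=\{i\in[n] : p\in U_i\}$. Then $p$ certifies $v\in C(\mathcal{U})$, while $p\in U_\sigma$ yields $\sigma\subseteq\supp(v)$ and $p\notin\union_{i\in\tau}U_i$ yields $\supp(v)\cap\tau=\emptyset$; thus $v\in S_f\cap C(\mathcal{U})$, so $S_f\cap C(\mathcal{U})\neq\emptyset$. This completes step (ii).

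The step with real content is the identity $f=\sum_{v\in S_f}\rho_v$ and its upshot that membership of a pseudo-monomial in the (generally non-radical) ideal $J_C$ is decided purely by whether its support set $S_f$ meets $C$; the remainder is elementary bookkeeping with supports, intersections, and unions. One should also check the empty-set conventions $U_\emptyset=X$ and $\union_{i\in\emptyset}U_i=\emptyset$ in the degenerate cases $\sigma=\emptyset$ or $\tau=\emptyset$, but they are consistent with both equivalences.
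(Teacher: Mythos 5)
The paper itself offers no proof of this statement: it is quoted as Lemma~4.2 of Curto, Itskov, Veliz-Cuba, and Youngs, and cited without argument. So there is no internal proof to compare against; I can only assess your proof on its own terms, and it is correct.

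Your key step is the identity $f=\sum_{v\in S_f}\rho_v$ over $\mathbb{F}_2$, which reduces membership of a pseudo-monomial in the (non-radical) ideal $J_C$ to the purely set-theoretic condition $S_f\cap C=\emptyset$, i.e., to $f$ vanishing on $C$. That is the genuinely substantive point, since vanishing on $C$ certifies only membership in $I(C)$, not in $J_C$, for a general polynomial. Your computation verifying the identity (factor out $x_\sigma\prod_{j\in\tau}(1+x_j)$ from each $\rho_v$ with $v\in S_f$; the quotients run over all $\{0,1\}$-patterns on $[n]\setminus(\sigma\cup\tau)$ and sum to $\prod_k\bigl(x_k+(1+x_k)\bigr)=1$) is right, and it matches the structure of the argument in the cited source. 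Step (ii), translating $S_f\cap C(\mathcal{U})=\emptyset$ into $U_\sigma\subseteq\bigcup_{i\in\tau}U_i$, is routine and you handle both directions and the empty-set conventions correctly. Your up-front restriction to $\sigma\cap\tau=\emptyset$ is also the right reading: if one drops it, the right-hand side is vacuously true while $f$ need not lie in $J_C$ (e.g.\ $C=\{0,1\}^n$, $J_C=0$), so the lemma is only sensible for pseudo-monomials, which is how both this paper and the original use it.
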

Thus, three types of receptive-field relationships (RF relationships) can be read off from pseudo-monomials in a neural ideal (e.g., those in the canonical form)~\cite{neural_ring}:
\begin{itemize}
\item[Type 1:] $x_\sigma \in J_C \iff U_\sigma = \emptyset $ (where $\sigma \neq \emptyset$).
\medskip
\item[Type 2:] $x_\sigma\prod_{i\in\tau}(1+ x_i)\in J_C \iff U_\sigma\subseteq\union_{i\in\tau}U_i$ (where $\sigma, \tau \neq \emptyset$).
\medskip
\item[Type 3:] $\prod_{i\in\tau}(1+x_i)\in J_C \iff X\subseteq\union_{i\in\tau}U_i$ (where $\tau \neq \emptyset$), and thus $X = \union_{i\in\tau}U_i$.
\end{itemize}

\begin{example} \label{ex:first-example-part-3}
The canonical form in Example~\ref{ex:first-example-part-2}, which is 
$\{ x_2(1+x_1),~x_2(1+x_3) \}$, encodes two Type 2 relationships: $U_2 \subseteq U_1$ and $U_2 \subseteq U_3$.  Indeed, we can verify this in Figure~\ref{fig:U}.
\end{example}

In this work, we reveal more types of RF relationships, which arise from non-pseudo-monomials.  They often appear in Gr\"obner bases of neural ideals (see Section \ref{sec:new-RF}).

\subsection{Gr\"obner bases} \label{sec:GB}
Here we recall some basics about Gr\"obner bases~\cite{gb-book,CLO-ideals,cocoa}.

Fix a monomial ordering $<$ of a polynomial ring $R=k[x_1,\dots, x_n]$ over a field $k$, and let $I$ be an ideal in $R$.  
Let $LT_<(I)$ denote the ideal generated by all leading terms, with respect to the monomial ordering $<$, of elements in $I$.

\begin{definition}
A \dfn{Gr\"obner basis} of $I$, with respect to $<$, is a finite subset 
of $I$ whose leading terms generate  $LT_<(I)$.  
\end{definition}
\noindent
One useful property of a \gb\space is that given a polynomial $f$ and a \gb\space $G$, the remainder of $f$ when divided by the set of elements in $G$ is uniquely determined. 

A Gr\"obner basis is \dfn{reduced} if (1) every $f \in G$
has leading coefficient 1, and (2) no term of any $f \in G$ is divisible by the leading term of any $g \in G$ for which $g\neq f$.   For a given monomial ordering, the reduced \gb\space of an ideal is unique. 

\begin{definition}
A \textbf{universal \gb} of an ideal $I$ is a  \gb\space that is a  \gb\space with respect to {\em every} monomial ordering. \textbf{The universal \gb} of an ideal $I$ is the union of all the reduced \gbs\space of $I$. 
\end{definition}

\textit{The} universal \gb\space is an instance of \textit{a} universal \gb, given that
the set of all distinct reduced \gbs\space of an ideal $I$ is finite~\cite[pg.\ 515]{gb-book}. This fact is actually the main result of the theory of \textit{Gr\"obner fans} first introduced in \cite{gb-fan}. 

\section{Main Result}\label{sec:main}
In this section, we give the main result of our paper: if the canonical form is a Gr\"obner basis, then it is the universal Gr\"obner basis (Theorem~\ref{thm:main}). 
Beyond being a natural expansion of some of Curto {\em et al.}'s results~\cite{neural_ring}, 
our theorem is also of mathematical interest since there are few classes of ideals whose universal Gr\"obner bases are known.  Indeed, such characterizations in general are known to be computationally difficult.  
\begin{thm} \label{thm:main}
If the canonical form of a neural ideal $J_C$ is a Gr\"obner basis of $J_C$ with respect to some monomial ordering, then it is \underline{the} universal Gr\"obner basis of $J_C$.
\end{thm}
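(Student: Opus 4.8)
The plan is to exploit one structural fact about pseudo-monomials and then bootstrap from a single order to all orders. A pseudo-monomial $f=\prod_{i\in\sigma}x_i\prod_{j\in\tau}(1+x_j)$ has a \emph{unique} term of top degree, namely $\prod_{i\in\sigma\cup\tau}x_i$, and every other term of $f$ divides it and has strictly smaller degree; hence $\mathrm{LT}_<(f)=\prod_{i\in\sigma\cup\tau}x_i$ for \emph{every} monomial order $<$. Writing ${\rm CF}(J_C)=\{f_1,\dots,f_m\}$ with $f_k=\prod_{i\in\sigma_k}x_i\prod_{j\in\tau_k}(1+x_j)$, the monomial ideal $M:=\langle \prod_{i\in\sigma_1\cup\tau_1}x_i,\dots,\prod_{i\in\sigma_m\cup\tau_m}x_i\rangle$ generated by the leading terms is therefore independent of the order, and since each $f_k\in J_C$ we have $M\subseteq \mathrm{LT}_<(J_C)$ for every $<$. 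Note that $M$ is a squarefree monomial ideal.

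First I would upgrade the hypothesis from one order to all orders. For any ideal $I$ and any monomial order, the monomials not lying in $\mathrm{LT}_<(I)$ form a basis of $R/I$; filtering by degree, the number of such monomials of degree at most $d$ equals the dimension of the image in $R/I$ of the polynomials of degree at most $d$, so this number depends only on $d$. Applying this to $J_C$, both with the hypothesis $\mathrm{LT}_{<_0}(J_C)=M$ (where $<_0$ is the given order) and with an arbitrary order $<$, and using $M\subseteq \mathrm{LT}_<(J_C)$, the two sets of standard monomials of each fixed degree must coincide, forcing $\mathrm{LT}_<(J_C)=M$. Hence ${\rm CF}(J_C)$ is a Gr\"obner basis with respect to \emph{every} monomial order, and $M=\mathrm{LT}_<(J_C)$ is a fixed squarefree monomial ideal.

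Next I would identify the reduced Gr\"obner basis $G_<$ for an arbitrary order $<$. Its leading terms are exactly the minimal monomial generators of $M$, so each $g\in G_<$ has $\mathrm{LT}_<(g)=\prod_{i\in\sigma_k\cup\tau_k}x_i$ for some $f_k\in{\rm CF}(J_C)$ with $\sigma_k\cup\tau_k$ minimal among $\sigma_1\cup\tau_1,\dots,\sigma_m\cup\tau_m$. Every non-leading term of $f_k$ has the form $\prod_{i\in\sigma_k\cup S}x_i$ with $S\subsetneq\tau_k$, and such a monomial cannot be divisible by any $\prod_{i\in\sigma_j\cup\tau_j}x_i$ (that would give $\sigma_j\cup\tau_j\subseteq\sigma_k\cup S\subsetneq\sigma_k\cup\tau_k$, contradicting minimality), so it is standard. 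Thus $f_k$ is monic with leading term $\prod_{i\in\sigma_k\cup\tau_k}x_i$ and all other terms standard, which is precisely the property characterizing the unique element of $G_<$ with that leading term; hence $g=f_k$. This shows $G_<\subseteq{\rm CF}(J_C)$, and more precisely $G_<=\{\,f_k\in{\rm CF}(J_C):\sigma_k\cup\tau_k\text{ is minimal among }\sigma_1\cup\tau_1,\dots,\sigma_m\cup\tau_m\,\}$, a set independent of $<$.

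It then remains to rule out a strict containment $\sigma_k\cup\tau_k\subsetneq\sigma_i\cup\tau_i$ between two elements of ${\rm CF}(J_C)$; granting this, every $\sigma_j\cup\tau_j$ is minimal, so $G_<={\rm CF}(J_C)$ for every $<$, and therefore the universal Gr\"obner basis (the union of all reduced Gr\"obner bases) equals ${\rm CF}(J_C)$, as claimed. To rule out the containment, suppose $f_k\neq f_i$ in ${\rm CF}(J_C)$ with $\sigma_k\cup\tau_k\subsetneq\sigma_i\cup\tau_i$. Then $\mathrm{LT}_<(f_k)$ properly divides $\mathrm{LT}_<(f_i)$, so ${\rm CF}(J_C)\setminus\{f_i\}$ still has leading terms generating $\mathrm{LT}_<(J_C)=M$ and is itself a Gr\"obner basis; hence $f_i$ reduces to $0$ when divided by ${\rm CF}(J_C)\setminus\{f_i\}$. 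The polynomial $\big(\prod_{i\in(\sigma_i\cup\tau_i)\setminus(\sigma_k\cup\tau_k)}x_i\big)\,f_k$ is a pseudo-monomial of $J_C$ with the same leading term $\prod_{i\in\sigma_i\cup\tau_i}x_i$ as $f_i$; feeding this into the division and tracking the reduction, one should be able to extract a pseudo-monomial of $J_C$ that strictly divides $f_i$, contradicting $f_i\in{\rm CF}(J_C)$. I expect this last step to be the main obstacle: when some coordinate in $\sigma_k$ lies in $\tau_i$ (or vice versa) the associated hypercubes are disjoint and set-theoretic reasoning alone fails, so the Gr\"obner property must be used essentially -- most plausibly via an induction on $|\sigma_i\cup\tau_i|$ or on $\#\,{\rm CF}(J_C)$, or by exploiting that $\mathrm{LT}_<(J_C)=M$ is squarefree (hence $J_C$ radical).
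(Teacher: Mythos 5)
Your first half is a genuinely different (and viable) route to what the paper calls Proposition~\ref{prop:a-universal}: the paper shows order-independence by carefully tracking leading terms through a full division of an arbitrary $f\in J_C$ by $\CF(J_C)$ and invoking Lemma~\ref{lem:pseudo}, whereas you show the initial ideal itself is the same for all orders by a basis-counting argument. Your conclusion there is correct, but the specific justification via degree filtration is shaky: for a non-degree-compatible order, the number of standard monomials of degree $\le d$ need \emph{not} equal $\dim_k\bigl(R_{\le d}/(I\cap R_{\le d})\bigr)$ (try $I=\langle x-y^2\rangle$ with lex $x>y$ versus lex $y>x$). You do not need the filtration at all: since $M\subseteq \LT_<(J_C)$ for every $<$, the standard monomials with respect to $<$ form a subset of the standard monomials with respect to $<_0$, and a $k$-basis of $R/J_C$ cannot be a proper subset of another $k$-basis, so the two initial ideals coincide. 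Your second step, identifying each $f_k$ with minimal $\sigma_k\cup\tau_k$ as the reduced Gr\"obner basis element with that leading term, is correct.

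The genuine gap is exactly where you flag it: ruling out a strict containment $\sigma_k\cup\tau_k\subsetneq\sigma_i\cup\tau_i$ between two elements of $\CF(J_C)$ (equivalently, that $\CF(J_C)$ is \emph{reduced}). You correctly reduce this to the claim that if a pseudo-monomial $f_i$ reduces to $0$ upon division by a set of pseudo-monomials, then one of them must divide $f_i$ (from which minimality of $f_i$ in $\CF(J_C)$ gives a contradiction). But you leave this as a sketch (``one should be able to extract\ldots''), and it is precisely the technical heart of the paper, proved there as Theorem~\ref{thm:divbyset}. The paper's argument encodes each pseudo-monomial $h$ as the interval $H(h)=[\alpha,\alpha\cup\beta]$ in the Boolean lattice $P([n])$ (its \emph{hypercube}) and uses Lemma~\ref{lem:hypdiv}, which translates divisibility of pseudo-monomials into containment and a cardinality-one intersection condition on these intervals. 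Writing $f=h_1+\cdots+h_t$ from the division, one shows by induction that each $\LT(h_i)\mid\LT(f)$, hence each $h_i$ is again a pseudo-monomial with $H(h_i)\subseteq P(\sigma\cup\tau)$. Since the equation is over $\F_2$ and $f$ contributes exactly one term supported in $P(\sigma)$ (namely $x_\sigma$), some $h_{i^*}$ must contribute an odd number of terms supported in $P(\sigma)$; but $H(h_{i^*})\cap P(\sigma)$ is an intersection of two intervals of the Boolean lattice, hence an interval, hence of size $2^q$ or empty, forcing $\lvert H(h_{i^*})\cap P(\sigma)\rvert=1$ and thus $h_{i^*}\mid f$, so $g_{i^*}\mid f$. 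Without this parity-and-intervals argument (or an equivalent substitute), your proof does not close. Your intuition that ``the Gr\"obner property must be used essentially'' is right, and your suggestion that $\LT(J_C)$ being squarefree might help is a dead end as stated (radicality alone does not rule out the containment); it is the $\F_2$-parity structure of pseudo-monomials that does the work.
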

The proof of Theorem~\ref{thm:main}, which appears in Section~\ref{sec:proof-main-theorem}, requires the following related results:
\begin{lemma}\label{lem:pseudo}
For a pseudo-monomial $f = x_{\sigma} \prod_{j \in \tau} (1+x_j)$ in $\mathbb{F}_2[x_1, \dots, x_n]$, the leading term of~$f$ with respect to {\em any} monomial ordering is its highest-degree term, $x_{\sigma\cup \tau}$.
\end{lemma}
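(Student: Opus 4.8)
The plan is to expand $f$ explicitly as an $\mathbb{F}_2$-linear combination of monomials and then invoke the single structural fact about monomial orderings that we need: divisibility of monomials implies comparability. First I would multiply out $\prod_{j \in \tau}(1+x_j) = \sum_{\tau' \subseteq \tau} x_{\tau'}$, so that
\[
f ~=~ x_\sigma \sum_{\tau' \subseteq \tau} x_{\tau'} ~=~ \sum_{\tau' \subseteq \tau} x_{\sigma \cup \tau'},
\]
where the last equality uses $\sigma \cap \tau = \emptyset$. Disjointness of $\sigma$ and $\tau$ also guarantees that the monomials $x_{\sigma \cup \tau'}$ are pairwise distinct as $\tau'$ ranges over the subsets of $\tau$, so no cancellation occurs and these $2^{|\tau|}$ monomials are exactly the terms of $f$, each with coefficient $1$.

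Next, I would observe that for every $\tau' \subseteq \tau$ the monomial $x_{\sigma \cup \tau'}$ divides $x_{\sigma \cup \tau}$ (this is precisely the fact already noted in Section~\ref{sec:CF}, since $\sigma \cup \tau' \subseteq \sigma \cup \tau$). Now recall the elementary property of any monomial ordering $<$ on $\mathbb{F}_2[x_1,\dots,x_n]$: if $m$ and $m'$ are monomials with $m \mid m'$, then $m \le m'$, with equality only if $m = m'$. Applying this with $m = x_{\sigma \cup \tau'}$ and $m' = x_{\sigma \cup \tau}$ shows that every term $x_{\sigma \cup \tau'}$ of $f$ with $\tau' \subsetneq \tau$ is strictly smaller than $x_{\sigma \cup \tau}$ with respect to $<$. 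Hence $x_{\sigma \cup \tau}$ is the $<$-largest monomial appearing in $f$, i.e.\ $\mathrm{LT}_<(f) = x_{\sigma \cup \tau}$; and since this argument uses nothing about $<$ beyond the monomial-ordering axioms, the same conclusion holds for every monomial ordering.

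There is no serious obstacle here. The only points that merit a moment's care are that the terms of $f$ really are distinct monomials — which is exactly where the hypothesis $\sigma \cap \tau = \emptyset$ enters, and without which $f$ would not even be in reduced form — and the precise statement of the ``divisibility implies comparability'' property of monomial orderings, which is standard (see, e.g., \cite{CLO-ideals}). Both are routine.
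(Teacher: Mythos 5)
Your proof is correct and follows essentially the same route as the paper's: both rest on the observation that every term of $f$ divides $x_{\sigma\cup\tau}$, combined with a monomial-ordering property. The only cosmetic difference is that you invoke the derived fact that divisibility implies comparability, whereas the paper cites the two underlying axioms (well-ordering, so $1 < x_i$, and compatibility with multiplication) from which that fact follows.
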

\begin{proof}
This follows from the fact that every term of $f$ divides 
$x_{\sigma\cup \tau}$, and two properties of a monomial ordering~\cite{CLO-ideals}: it is a well-ordering (so, $1<x_i$), and $x_{\alpha} < x_{\beta}$ implies $x_{\alpha\cup \gamma} < x_{\beta \cup \gamma}$. 
\end{proof}

\begin{prop} \label{prop:a-universal}
If the canonical form of a neural ideal $J_C$ is a Gr\"obner basis of $J_C$ with respect to some monomial ordering, then it is \underline{a} universal Gr\"obner basis of $J_C$.
\end{prop}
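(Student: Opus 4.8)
The plan is to deduce the proposition from Buchberger's criterion, after observing that Lemma~\ref{lem:pseudo} makes every ingredient of that criterion independent of the monomial ordering whenever the generating set under consideration consists of pseudo-monomials.

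First I would recall that $\CF(J_C)$ generates $J_C$~\cite{neural_ring}, and that each of its members is a pseudo-monomial $f = x_\sigma\prod_{j\in\tau}(1+x_j)$ whose leading term is $x_{\sigma\cup\tau}$ with respect to \emph{every} monomial ordering, by Lemma~\ref{lem:pseudo}. From this I would extract two consequences. (i) For any $f,g \in \CF(J_C)$, the S-polynomial $S(f,g)$ is assembled only from $f$, $g$, their leading terms, and the least common multiple of those leading terms; since the leading terms are order-independent and monic --- and since we work over $\F_2$, so there are no coefficient subtleties --- $S(f,g)$ is literally the same polynomial for every monomial ordering. (ii) The elementary reduction relation ``modulo $\CF(J_C)$'', which passes from $p$ to $p - x^\alpha g$ exactly when $x^\alpha \cdot \LT(g)$ occurs as a term of $p$ for some $g \in \CF(J_C)$, depends on the ordering only through the leading terms $\LT(g)$, hence is one and the same relation for every monomial ordering. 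Combining (i) and (ii), the property ``$S(f,g)$ reduces to $0$ modulo $\CF(J_C)$'' is a single condition --- call it $(\star)$ --- that makes no reference to any ordering.

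With these observations in hand, the rest is short. By Buchberger's criterion, for each monomial ordering $<$ the assertion ``$\CF(J_C)$ is a \gb\ of $J_C$ with respect to $<$'' is equivalent to $(\star)$ holding for every pair of elements of $\CF(J_C)$; since $(\star)$ holds by hypothesis for at least one ordering, it holds for all of them, so $\CF(J_C)$ is a \gb\ of $J_C$ with respect to every monomial ordering, i.e.\ a universal \gb. The degenerate cases $|\CF(J_C)| \le 1$ --- in particular $\CF(J_C) = \emptyset$, when $J_C = (0)$ --- carry no S-polynomials and are immediate.

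The step I expect to require the most care is invoking Buchberger's criterion in a form that does not covertly reintroduce a dependence on the ordering, since the familiar textbook statement phrases ``reduces to $0$'' via the deterministic division algorithm, whose bookkeeping does refer to the ordering. I would sidestep this either by using the version of the criterion stating that $G$ is a \gb\ precisely when each S-polynomial can be brought to $0$ by \emph{some} finite chain of elementary reductions modulo $G$ (see~\cite{gb-book}), or by arguing directly that a standard representation of an S-polynomial obtained by dividing with respect to one ordering remains standard with respect to any other --- each reduction step lowers the leading term for every ordering simultaneously, because the subtracted multiple of $g$ has $<'$-leading term equal to the term of $p$ being cancelled and $\LT(g)$ is order-independent --- whence $\LT_{<'}(J_C) = \langle \LT(g) : g \in \CF(J_C)\rangle$ for every ordering $<'$, which is precisely the claim.
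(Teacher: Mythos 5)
Your proposal is correct in its primary (``either'') branch, but the route is genuinely different from the paper's. The paper argues directly: it takes an arbitrary $f\in J_C$, runs the $<_1$-division of $f$ by $\CF(J_C)$ to get $f=h_1+\dots+h_t$ with $h_i=\frac{\LT_{<_1}(r_{i-1})}{\LT(g_i)}g_i$, and then exploits three facts specific to pseudo-monomials --- the $\LT(h_i)$ are ordering-independent, they are pairwise distinct (from the strict descent of the division algorithm), and every term of $h_i$ divides $\LT(h_i)$ --- to conclude that the $<_2$-largest among the $\LT(h_i)$ is literally $\LT_{<_2}(f)$ and is divisible by some $\LT(g_{i^*})$. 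You instead reduce the whole problem to Buchberger's criterion and observe that, over $\F_2$ with pseudo-monomial generators, the S-polynomials and the non-deterministic one-step reduction relation are both independent of the ordering, so the condition of Buchberger's criterion is a single ordering-free predicate. Your version trades the paper's explicit, self-contained computation for a reliance on the ``any finite reduction chain to $0$'' form of Buchberger's criterion (equivalently, the standard-representation form), and it only needs order-independence of $\LT(g)$ for $g\in G$, not the further distinctness/divisibility bookkeeping the paper performs. Both are valid; the paper's proof in addition shows something slightly stronger, namely that the $<_1$-division representation of any $f\in J_C$, not just of S-polynomials, is automatically $<_2$-standard.

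One caution about your fallback (``or'') branch: the claim that ``each reduction step lowers the leading term for every ordering simultaneously'' is not true as stated. At a given step the subtracted multiple $\frac{\LT_{<_1}(r_{k-1})}{\LT(g)}\,g$ has $<'$-leading term equal to $\LT_{<_1}(r_{k-1})$, which is \emph{some} term of $r_{k-1}$ but need not be its $<'$-leading term; if $\LT_{<_1}(r_{k-1})<_{<'}\LT_{<'}(r_{k-1})$ the subtraction leaves the $<'$-leading term untouched. So $<'$-leading terms are only weakly decreasing along the chain, and the conclusion that the representation is $<'$-standard still requires an argument. The paper supplies exactly this missing piece by noting that the $\LT(h_i)$ are distinct and that all terms of each $h_i$ divide $\LT(h_i)$, which forces the $<'$-maximum among the $\LT(h_i)$ to survive the sum and equal $\LT_{<'}(f)$. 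Your primary branch does not suffer from this issue, since the weak-reduction chain already yields a standard representation (with respect to whichever ordering one uses to read off $\LT$'s, and those $\LT$'s are ordering-independent), so the proposal stands; just be aware that if you were to rely solely on the fallback branch, you would need to replace the ``each step lowers the leading term'' claim with something like the paper's distinctness argument.
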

\begin{proof}
Let $G$ denote the canonical form, and assume that $G$ is a Gr\"obner basis with respect to some monomial ordering $<_1$. 
Let $<_2$ denote another monomial ordering.
As always, we have the containment $\LT_{<_2}(G) \subseteq \LT_{<_2}(J_C)$, which we must prove is an equality.  
Accordingly, let $f \in J_C$.  
We must show that $\LT_{<_2}(f) \in \LT_{<_2}(G)$.
With respect to $<_1$, the reduction of $f$ by $G$ is 0, so we can write $f$ as a polynomial combination of some of the $g_i\in G$ in the following form:
\begin{align} \label{eq:f-sum-1}
f~=~
\frac{\LT_{<_1}(f)}{\LT(g_{1})}g_{1}+\frac{\LT_{<_1}(r_1)}{\LT(g_{2})}g_{2}+\dots+ 
	\frac{\LT_{<_1}(r_{t-1})}{\LT(g_{t})}g_{t}
~=~h_1+\dots+h_t~,
\end{align}
where (for $i=1,\dots, t$) we have  $g_i \in G$, $h_i := \frac{\LT_{<_1}(r_{i-1})}{\LT(g_{i})}g_{i}$, $r_0:= f$, and $r_i=f-h_1-\dots-h_i$ is the remainder after the $i$-th division of $f$ by $G$.  Note that in equation \eqref{eq:f-sum-1}, the polynomial $g_i$ may appear multiple times, but this does not affect our arguments.
By Lemma~\ref{lem:pseudo}, the leading term of $g_i$ does not depend on the monomial ordering. 
%
Moreover, each $h_i$ is the product of a monomial and a pseudo-monomial, $g_i$, so 
by a straightforward generalization of Lemma~\ref{lem:pseudo},
the leading term of $h_i$ with respect to {\em any} monomial ordering is $\LT_{<_1}(h_i)$. 
Also note that when dividing by the Gr\"obner basis $G$, 
$\LT_{<_1}(r_i) <_1 \LT_{<_1}(r_{i-1})$  so
the $\LT_{<_1}(r_{i})$ are distinct. 
This implies that the $\LT_{<_1}(h_{i})$ are distinct since $\LT_{<_1}(h_{i}) = \LT_{<_1}(r_{i-1})$.

Hence, among the list of monomials $\{\LT(h_{i})\}_{i=1}^t$, there is a unique largest monomial with respect to $<_2$, which we denote by $\LT(h_{i^*})$.  
Next, by examining the sum in~\eqref{eq:f-sum-1}, 
and noting that 
every term of $h_i$ divides the leading term of $h_i$, 
we see that $\LT_{<_2}(f)=\LT(h_{i^*})$.
Thus, because $g_{i^*}$ divides $h_{i^*}$, it follows that $\LT(g_{i^*})$ divides $\LT_{<_2}(f)$, and so, $\LT_{<_2}(f) \in \LT_{<_2}(G)$. 

Thus, if the canonical form is a Gr\"obner basis with respect to {\em some} monomial ordering, then it is a Gr\"obner basis with respect to {\em every} monomial ordering. 
\end{proof}

\subsection{Pseudo-monomials and hypercubes} \label{sec:hypercube}
To prove our main result (Theorem~\ref{thm:main}), we need to develop the connection between pseudo-monomials and hypercubes in the Boolean lattice.  The {\bf Boolean lattice} on $[n]$ is the power set $P([n]):=2^{[n]}$, partially ordered by inclusion.  Also,
for $\sigma \subseteq [n]$, we
let $P(\sigma)$ denote the power set of $\sigma$.

The {\bf support} of a monomial $\prod_{i=1}^n x_i^{a_i}$ is the set $\{ i \in [n] \mid a_i>0 \}$.

\begin{definition} \label{def:hypercube}
Let $f= x_{\sigma} \prod_{j \in \tau} (1+x_j)$ be a pseudo-monomial in $\mathbb{F}_2[x_1, \dots, x_n]$.  The {\bf hypercube of $f$}, denoted by $H(f)$, is the sublattice of the Boolean lattice on $[n]$ formed by the support of each term of $f$.
\end{definition}

\begin{remark} \label{rem:hypercube}
The hypercube of $f$ is the {\em interval} of the Boolean lattice from $\sigma$ to $\sigma \cup \tau$:
\[
H(f)=\{ \omega \mid \sigma \subseteq \omega \subseteq \sigma \cup \tau \} \subseteq P([n])~,
\]
and thus its Hasse diagram is a hypercube (this justifies its name).  
This is because:
\[ f=x_{\sigma} \prod\limits_{j \in \tau} (1+x_j) = \sum\limits_{\{ \theta \mid \theta \subseteq \tau\} } x_{\sigma \cup \theta}~.\]  
\end{remark}

\begin{example} \label{ex:hypercube}
Let $f=x_1x_2(1+x_3)(1+x_4)=x_1x_2x_3x_4+x_1x_2x_3+x_1x_2x_4+x_1x_2$. Figure~\ref{fig:hypercube} shows part of the Hasse diagram of $P([4])$, with the hypercube of $f$ indicated by circles and solid lines.  

\begin{figure}[ht]
	\begin{center}
		\begin{tikzpicture}
		
		\node[circle, draw] (1234) at (0,0) {$1234$};
		
		\node[circle, draw] (124) at (-1.25, .7*\h) {$124$};
		\node (134) at (1.25,.7*\h) {$134$};
		\node[circle, draw] (123) at (-3.75, .7*\h) {$123$};
		\node (234) at (3.75, .7*\h){$234$};
		
		\node[circle, draw] (12) at (-5, 2*\h) {$12$};
		\node (14) at (-3, 2*\h) {$13$}; 
		\node (13) at (-1, 2*\h) {$14$};
		\node (23) at (1, 2*\h) {$23$};
		\node (24) at (3, 2*\h) {$24$};
		\node (34) at (5, 2*\h) {$34$};
		
		\node (1) at (-3.8, 3*\h) {$1$};
		\node (2) at (-1.3, 3*\h) {$2$};
		\node (3) at (1.2, 3*\h) {$3$};
		\node (4) at (3.7, 3*\h) {$4$};
		
		\node (null) at (0, 3.7*\h) {$\emptyset$};
        
		\draw [red, very thick] (1234) -- (123) -- (12) -- (124) -- (1234);
		\draw [blue, loosely dashed](134) -- (14) -- (1) -- (13) -- (134);
		\draw [brown, loosely dashed] (234) -- (23) -- (2) -- (24) -- (234);
		\draw [gray, loosely dashed] (34) -- (3) -- (null) -- (4) -- (34) ;
		
		\draw [magenta, dotted, thick] (12) -- (1) -- (null) -- (2) -- (12);
		\end{tikzpicture}
	\end{center}
\caption{Displayed is part of the Hasse diagram of the Boolean lattice $P([4])$. The hypercube of $f=x_1x_2(1+x_3)(1+x_4)$ is indicated by circles and solid lines, and $P([2])$ is marked by dotted lines.  If $g$ is a pseudo-monomial that divides $f$, then its hypercube is contained in
either the hypercube of $f$ or 
one of the dashed-line squares ``parallel" to the hypercube of $f$
(see Example~\ref{ex:continued}). \label{fig:hypercube}}    
\end{figure}
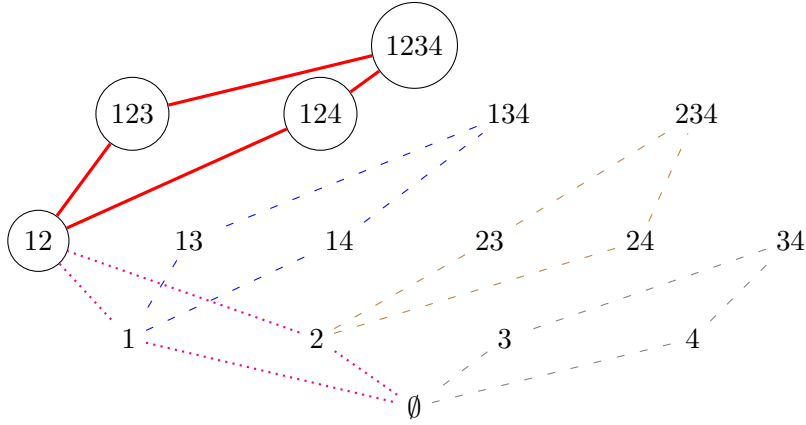    
    
\end{example}
Via hypercubes, divisibility of pseudo-monomials has a nice geometric interpretation:
\begin{lemma} \label{lem:hypdiv}
	For pseudo-monomials $f=\psmj{\sigma}{\tau}$ and $g=\psmj{\alpha}{\beta}$, the following are equivalent:
    \begin{enumerate}
	\item $g|f$,
	\item $\alpha\subseteq \sigma$ and $\beta \subseteq \tau$, 
    \item $H(g) \subseteq P(\sigma\cup\tau)$ and $ H(g) \cap P(\sigma)  = \{\alpha\}$, and 
    \item $H(g) \subseteq P(\sigma\cup\tau)$ and $ \left| H(g) \cap P(\sigma) \right| = 1 $.
	\end{enumerate}    
\end{lemma}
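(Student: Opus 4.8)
The plan is to prove Lemma~\ref{lem:hypdiv} by establishing the cycle of implications $(1)\Leftrightarrow(2)$, $(2)\Rightarrow(3)$, $(3)\Rightarrow(4)$, and $(4)\Rightarrow(2)$, which together give the equivalence of all four statements. The equivalence $(1)\Leftrightarrow(2)$ is the arithmetic heart of the matter and should come first. For the forward direction, if $g \mid f$, write $f = gh$ for some $h \in \mathbb{F}_2[x_1,\dots,x_n]$; using the fact recorded after the definition of pseudo-monomial (every term of $f$ divides its highest-degree term $x_{\sigma\cup\tau}$, and similarly for $g$), I would compare highest-degree terms to get $x_{\alpha\cup\beta} \cdot \LT(h) = x_{\sigma\cup\tau}$, forcing $\alpha \cup \beta \subseteq \sigma \cup \tau$; then one must separate this into $\alpha \subseteq \sigma$ and $\beta \subseteq \tau$. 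The cleanest way is to evaluate: $g$ divides $f$ in $\mathbb{F}_2[x_1,\dots,x_n]$, and since $x_i \mapsto 1$ and $x_i \mapsto 0$ are allowed substitutions, one checks that $\alpha \subseteq \sigma$ (otherwise pick $i \in \alpha \setminus \sigma$; setting $x_i = 0$ kills $g$ but, if $i \notin \sigma$, need not kill $f$ unless $i \in \tau$, but then $x_i=0$ makes the factor $(1+x_i)=1$, so $f$ need not vanish) and analogously $\beta \subseteq \tau$ by setting $x_j = 1$ for $j \in \beta \setminus \tau$. The converse $(2)\Rightarrow(1)$ is immediate: $\alpha\subseteq\sigma$ and $\beta\subseteq\tau$ (together with $\sigma\cap\tau=\emptyset$, hence $\alpha\cap\beta=\emptyset$) give $f = g \cdot \big(x_{\sigma\setminus\alpha}\prod_{j\in\tau\setminus\beta}(1+x_j)\big)$.

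Next, for $(2)\Rightarrow(3)$: by Remark~\ref{rem:hypercube}, $H(g) = \{\omega \mid \alpha \subseteq \omega \subseteq \alpha\cup\beta\}$. Since $\alpha\subseteq\sigma$ and $\beta\subseteq\tau$, every such $\omega$ satisfies $\omega \subseteq \alpha\cup\beta \subseteq \sigma\cup\tau$, giving $H(g)\subseteq P(\sigma\cup\tau)$. For the second part, if $\omega \in H(g)\cap P(\sigma)$ then $\alpha\subseteq\omega\subseteq\sigma$ and also $\omega\subseteq\alpha\cup\beta$; since $\omega\subseteq\sigma$ and $\sigma\cap\tau=\emptyset$ while $\beta\subseteq\tau$, we get $\omega\cap\beta=\emptyset$, so $\omega\subseteq(\alpha\cup\beta)\setminus\beta = \alpha$ (using $\alpha\cap\beta=\emptyset$), whence $\omega=\alpha$. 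Thus $H(g)\cap P(\sigma)=\{\alpha\}$. The implication $(3)\Rightarrow(4)$ is trivial since $\{\alpha\}$ has cardinality one.

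Finally, $(4)\Rightarrow(2)$: assume $H(g)\subseteq P(\sigma\cup\tau)$ and $|H(g)\cap P(\sigma)| = 1$. The minimal element of $H(g)$ is $\alpha$ and the maximal is $\alpha\cup\beta$; since $\alpha \in H(g) \subseteq P(\sigma\cup\tau)$ we get $\alpha\cup\beta\subseteq\sigma\cup\tau$, hence $\alpha\subseteq\sigma\cup\tau$ and $\beta\subseteq\sigma\cup\tau$. To pin down $\alpha\subseteq\sigma$: note $\alpha\in P(\sigma)$ would need checking — actually the single element of $H(g)\cap P(\sigma)$ must be $\alpha$ itself, because $\alpha$ is the bottom of the interval $H(g)$ and if any $\omega\in H(g)\cap P(\sigma)$ then $\alpha\subseteq\omega$, and $\alpha\subseteq\alpha\cup\beta\subseteq\sigma\cup\tau$, so whether $\alpha\in P(\sigma)$ is what I must verify; but $\alpha\cap\tau$: suppose $i\in\alpha\cap\tau$. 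Then consider $\alpha\setminus\{i\}$ — this need not be in $H(g)$. Instead, the right argument: for any $i\in\tau$, the set $\alpha\cup(\beta\cap\{i\})$ manipulations — I think the clean route is that $H(g)\cap P(\sigma)$ nonempty (it contains at least $\alpha\cap\sigma$? no). Let me instead argue $\alpha\subseteq\sigma$ directly: the interval $H(g)$ meets $P(\sigma)$ in exactly one point, and $P(\sigma)$ is itself the interval $[\emptyset,\sigma]$; the intersection of the intervals $[\alpha,\alpha\cup\beta]$ and $[\emptyset,\sigma]$ in the Boolean lattice is $[\alpha\cap\sigma,\ (\alpha\cup\beta)\cap\sigma]$ when nonempty, and it is nonempty iff $\alpha\cap\sigma$ lies below $(\alpha\cup\beta)\cap\sigma$, i.e. iff $\alpha\subseteq(\alpha\cup\beta)\cap\sigma$, i.e. iff $\alpha\subseteq\sigma$ (since $\alpha\subseteq\alpha\cup\beta$ automatically). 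So nonemptiness of $H(g)\cap P(\sigma)$ already forces $\alpha\subseteq\sigma$, and then the intersection is $[\alpha,\ (\alpha\cup\beta)\cap\sigma] = [\alpha,\ \alpha\cup(\beta\cap\sigma)]$, which has cardinality $2^{|\beta\cap\sigma|}$; this equals $1$ iff $\beta\cap\sigma=\emptyset$, and combined with $\beta\subseteq\sigma\cup\tau$ this gives $\beta\subseteq\tau$. This completes $(4)\Rightarrow(2)$.

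The main obstacle I anticipate is the bookkeeping in $(4)\Rightarrow(2)$: correctly identifying the intersection of two intervals in the Boolean lattice as an interval and extracting both $\alpha\subseteq\sigma$ and $\beta\subseteq\tau$ from the single fact that the intersection is a singleton. Writing $P(\sigma)=[\emptyset,\sigma]$ and using that the meet of intervals $[\alpha,\alpha\cup\beta]\cap[\emptyset,\sigma] = [\alpha\cap\emptyset,\ (\alpha\cup\beta)\cap\sigma]$ — more precisely $[\,\alpha\vee\emptyset,\ (\alpha\cup\beta)\wedge\sigma\,] = [\alpha,\ (\alpha\cup\beta)\cap\sigma]$, valid exactly when $\alpha\subseteq(\alpha\cup\beta)\cap\sigma$ — makes the argument transparent, so I would state this interval-intersection fact explicitly as the key lemma-within-the-proof and let the rest follow by counting $2^{|\beta\cap\sigma|}=1$.
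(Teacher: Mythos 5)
Your proof is correct and covers all four implications, but it departs from the paper's proof in two places, both worth noting. For $(1)\Rightarrow(2)$ the paper simply invokes that $\mathbb{F}_2[x_1,\dots,x_n]$ is a UFD; you instead argue by evaluation (setting $x_i=0$ for a hypothetical $i\in\alpha\setminus\sigma$, or $x_j=1$ for $j\in\beta\setminus\tau$, and exhibiting a point where $g$ vanishes but $f$ does not). Both work; the UFD argument is one line, while your evaluation argument is more hands-on and avoids citing a structural fact, at the cost of a little bookkeeping in choosing the witness point. The steps $(2)\Rightarrow(3)\Rightarrow(4)$ are essentially the same in both.

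The real divergence is in $(4)\Rightarrow(2)$. The paper argues element-by-element: it first shows the singleton $H(g)\cap P(\sigma)$ must be $\{\alpha\}$ (because $\alpha\subseteq\omega$ for any $\omega$ in the intersection), and then, for each $k\in\beta$, observes that $\alpha\cup\{k\}\in H(g)$ cannot lie in $P(\sigma)$, forcing $k\notin\sigma$ and hence $k\in\tau$. You instead treat $H(g)=[\alpha,\alpha\cup\beta]$ and $P(\sigma)=[\emptyset,\sigma]$ as intervals in the Boolean lattice and use the general fact that $[a,b]\cap[c,d]=[a\vee c,\ b\wedge d]$ when nonempty. This gives $H(g)\cap P(\sigma)=[\alpha,\ \alpha\cup(\beta\cap\sigma)]$ (nonempty iff $\alpha\subseteq\sigma$), of cardinality $2^{|\beta\cap\sigma|}$; setting this equal to $1$ forces $\beta\cap\sigma=\emptyset$, which together with $\beta\subseteq\sigma\cup\tau$ gives $\beta\subseteq\tau$. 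Your version is more structural and actually a bit stronger, since it computes the exact cardinality of $H(g)\cap P(\sigma)$ rather than merely testing whether it is $1$; it also makes the role of the Boolean lattice more explicit, which fits the hypercube theme the paper develops. The paper's version is more elementary and needs no appeal to the interval-intersection formula.
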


\begin{proof}
The implication (1) $\Leftarrow$ (2) is clear, and (1) $\Rightarrow$ (2) follows from the fact that $\biring$ is a unique factorization domain.  
For (2) $\Rightarrow$ (3), assume that $\alpha\subseteq \sigma$ and $\beta \subseteq \tau$.  Then $H(g) \subseteq P(\alpha \cup \beta) \subseteq P(\sigma \cup \tau)$.  So, we need only show that $H(g) \cap P(\sigma)  = \{ \alpha \}$.  To see this, we first recall:
\begin{align} \label{eq:hypercube-g}
H(g) ~=~ \{ \alpha \cup \theta \mid \theta \subseteq \beta \}
\end{align}
from Remark~\ref{rem:hypercube}. 
Thus,  
\begin{align*}
H(g) \cap P(\sigma)
~=~ 
 \{ \alpha \cup \theta \mid \theta \subseteq \beta \text{ and } \theta \subseteq \sigma \}
~=~ 
 \{\alpha\}~,
\end{align*}
where the second equality follows from hypotheses: $\alpha \subseteq \sigma$ and 
$\sigma \cap \beta \subseteq \sigma \cap \tau = \emptyset$ (because $\beta \subseteq \tau$).

(3) $\Rightarrow$ (4) is clear, so we need only show (2) $\Leftarrow$ (4).  Accordingly, suppose $H(g) \subseteq P(\sigma \cup \tau)$ and $I:=H(g) \cap P(\sigma)$ consists of only one element.  We claim that this element is $\alpha$.  Indeed, let $\omega \in I$ (i.e., $\omega \in H(g)$ and $\omega \subseteq \sigma$); then, $\alpha $ also is in $I$ (because $\alpha \in H(g)$ and $\alpha \subseteq \omega \subseteq \sigma$). So, $\alpha = \omega \subseteq \sigma$. 

To complete the proof, we must show that $\beta \subseteq \tau$.  To this end, let $k \in \beta$.  Then $\alpha \cup \{k\}$ is in $H(g)$, by equation~\eqref{eq:hypercube-g}, so it is {\em not} in $P(\sigma)$ (because $H(g) \cap P(\sigma)= \{\alpha\}$).  So, $k \in ( \beta \setminus \sigma )$.  Finally, $( \beta \setminus \sigma ) \subseteq \tau$, because $\alpha \cup \beta \subseteq \sigma \cup \tau$ follows from the hypothesis $H(g) \subseteq P(\sigma \cup \tau)$.  So, $k \in \tau$.  
\end{proof}

\begin{example} \label{ex:continued}
We return to the pseudo-monomial $f=x_1x_2(1+x_3)(1+x_4)$, which we rewrite as $f=x_{\sigma} \prod_{j \in \tau} (1+x_j)$, where $\sigma=\{1,2\}$ and $\tau=\{3,4\}$.  
In Figure~\ref{fig:hypercube}, $P(\sigma)=P([2])$ is marked by the dotted line. According to Lemma~\ref{lem:hypdiv}, a pseudo-monomial $h$ divides $f$ if and only if the hypercube of $h$ satisfies two conditions:
it includes a vertex from $P(\sigma)$, and
it is contained within either the hypercube of $f$ or 
one of the dashed-line squares ``parallel" to the hypercube of $f$ in Figure~\ref{fig:hypercube}.
\end{example}

\subsection{Multivariate division by pseudo-monomials} 
\label{sec:reduction}

The following result concerns reducing a given pseudo-monomial by a set of pseudo-monomials.
\begin{thm}\label{thm:divbyset}
	Consider a pseudo-monomial 
    $f=\psm{\sigma}{\tau}\in \biring$, and let $G$ be a finite set of pseudo-monomials in $\biring$. If some remainder upon division of $f$ by $G$ is $0$ for some monomial ordering, then there exists $g \in G$ such that $g$ divides $f$.
\end{thm}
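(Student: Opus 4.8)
The plan is to argue by induction on the number of division steps needed to reduce $f$ to $0$ by $G$ under the given monomial ordering $<$. If the reduction takes zero steps, then $f = 0$, which is not a pseudo-monomial, so there is nothing to prove; the base case is a single step. In a single step, some $g \in G$ has $\LT_<(g)$ dividing $\LT_<(f)$, and the remainder $f - \frac{\LT_<(f)}{\LT_<(g)} g$ is $0$, i.e.\ $f = \frac{\LT_<(f)}{\LT_<(g)} g$. By Lemma~\ref{lem:pseudo}, $\LT_<(g) = x_{\alpha \cup \beta}$ where $g = x_\alpha \prod_{j \in \beta}(1+x_j)$, and $\LT_<(f) = x_{\sigma \cup \tau}$; since $f = (\text{monomial}) \cdot g$, certainly $g \mid f$, and we are done.

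For the inductive step, suppose the first division step uses $g \in G$, so that $\LT_<(g) \mid \LT_<(f) = x_{\sigma \cup \tau}$, and the new remainder is $r := f - \frac{\LT_<(f)}{\LT_<(g)} g$, which then reduces to $0$ in fewer steps. The key obstacle is that $r$ need not be a pseudo-monomial, so the inductive hypothesis does not apply directly to $r$. To get around this, I would instead analyze the structure forced by $\LT_<(g) \mid x_{\sigma \cup \tau}$ together with the hypercube machinery of Section~\ref{sec:hypercube}. Writing $g = x_\alpha \prod_{j \in \beta}(1+x_j)$ with $\alpha \cap \beta = \emptyset$, the condition $\LT_<(g) = x_{\alpha \cup \beta}$ divides $x_{\sigma \cup \tau}$ gives $\alpha \cup \beta \subseteq \sigma \cup \tau$, equivalently $H(g) \subseteq P(\sigma \cup \tau)$. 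By Lemma~\ref{lem:hypdiv}, $g \mid f$ will follow once we show $|H(g) \cap P(\sigma)| = 1$. So the heart of the argument is to rule out $|H(g) \cap P(\sigma)| = 0$ and $|H(g) \cap P(\sigma)| \geq 2$.

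The case $|H(g) \cap P(\sigma)| \geq 2$ is impossible for a cheap reason: $H(g)$ is an interval $[\alpha, \alpha \cup \beta]$, and $P(\sigma)$ is a down-set, so $H(g) \cap P(\sigma)$, being the intersection of an interval with a down-set, is itself an interval $[\alpha, \gamma]$ for some $\gamma$; moreover since $\alpha \subseteq \sigma \cup \tau$ but also $\alpha$ may fail to lie in $P(\sigma)$ — wait, in fact if $\alpha \notin P(\sigma)$ the intersection is empty, handled below; if $\alpha \in P(\sigma)$ then $H(g) \cap P(\sigma) = [\alpha, (\alpha \cup \beta) \cap \sigma]$, and this has size $1$ exactly when $\beta \cap \sigma = \emptyset$, i.e.\ $\beta \subseteq \tau$. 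So the real content is: the reduction-to-zero hypothesis forces $\alpha \subseteq \sigma$ and $\beta \cap \sigma = \emptyset$. For this I would track monomials carefully: since $f$ is a pseudo-monomial, every term of $f$ is of the form $x_{\sigma \cup \theta}$ with $\theta \subseteq \tau$, so every monomial appearing anywhere in the reduction — in $f$, in each $h_i = \frac{\LT_<(r_{i-1})}{\LT_<(g_i)} g_i$, hence in each $g_i$ (after clearing the monomial multiplier) — must be "compatible" with this structure. Concretely, the generalization of Lemma~\ref{lem:pseudo} used in the proof of Proposition~\ref{prop:a-universal} shows each $h_i$ has a monomial-order-independent leading term, and the telescoping sum $f = \sum h_i$ together with the distinctness of the $\LT_<(r_{i-1})$ forces a cancellation pattern. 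I expect the cleanest route is: show by downward induction on $<$ that for the largest surviving monomial at each stage, the multiplier monomial $\frac{\LT_<(r_{i-1})}{\LT_<(g_i)}$ has support inside $\sigma \cup \tau$ and contains $\sigma \setminus (\text{support of } g_i)$, which pins down that $g_1 = g$ itself satisfies $\alpha \subseteq \sigma$, $\beta \subseteq \tau$, giving $g \mid f$ by Lemma~\ref{lem:hypdiv}. The main obstacle, and where I would spend the most care, is precisely controlling which monomials can appear in the intermediate remainders $r_i$ — ensuring that no term "outside" the hypercube structure of $f$ is ever created and then cancelled, since such spurious cancellation is exactly what would allow a $g \in G$ with $\beta \cap \sigma \neq \emptyset$ to participate without dividing $f$.
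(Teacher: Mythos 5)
Your base case and your identification of Lemma~\ref{lem:hypdiv} as the right tool are both on target, but the inductive step contains a false intermediate claim: you aim to show that the \emph{first} divisor $g_1$ used in the reduction divides $f$, and this is simply not true. Take $f = x_1 x_2$ (so $\sigma = \{1,2\}$, $\tau = \emptyset$) and $G = \{x_1(1+x_2),\ x_1\}$. Dividing $f$ first by $g_1 = x_1(1+x_2)$ (whose leading term $x_1x_2$ equals $\LT(f)$) leaves remainder $x_1$, which is then cleared by $g_2 = x_1$. So $f$ reduces to $0$, yet $g_1 \nmid f$; only $g_2$ divides $f$. The underlying reason is the one you yourself flagged: $\LT(g_1) \mid x_{\sigma\cup\tau}$ only forces $\alpha \cup \beta \subseteq \sigma \cup \tau$, not the separated containments $\alpha \subseteq \sigma$ and $\beta \subseteq \tau$, and no amount of bookkeeping about the first step alone can manufacture those.

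The paper does not locate the correct divisor by induction on division steps; it uses a parity argument over $\mathbb{F}_2$, and this is the key idea missing from your sketch. One first shows $\LT(h_i) \mid \LT(f)$ for every $i$ (by induction, using that every term of a pseudo-monomial, and of each $h_j$, divides its own leading term), which forces each $h_i$ to be a pseudo-monomial with $H(h_i) \subseteq P(\sigma \cup \tau)$. Then, since $f = h_1 + \dots + h_t$ over $\mathbb{F}_2$, every monomial must appear an even number of times across $f, h_1, \dots, h_t$. Because $f$ has exactly one term supported in $P(\sigma)$, namely $x_\sigma$, some $h_{i^*}$ has an odd number of terms supported in $P(\sigma)$, i.e.\ $\left| H(h_{i^*}) \cap P(\sigma) \right|$ is odd. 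But $H(h_{i^*})$ and $P(\sigma)$ are hypercubes in the Boolean lattice, so their intersection, if nonempty, has size $2^q$; odd forces $q=0$, giving $\left| H(h_{i^*}) \cap P(\sigma) \right| = 1$, hence $h_{i^*} \mid f$ by Lemma~\ref{lem:hypdiv}, and therefore $g_{i^*} \mid h_{i^*} \mid f$. Your proposed ``downward induction on $<$'' has no analogous mechanism for selecting the right $g_i$ among all those that participate in the reduction, which is exactly where your sketch stalls.
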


\begin{proof}
Suppose that some remainder on division of $f$ by $G$ is 0:
\begin{align} \label{eq:f-sum}
f~=~
\frac{\LT(f)}{\LT(g_{1})}g_{1}+\frac{\LT(r_1)}{\LT(g_{2})}g_{2}+\dots+ 
	\frac{\LT(r_{t-1})}{\LT(g_{t})}g_{t}
~=~h_1+\dots+h_t~,
\end{align}
where, as in the proof of Proposition~\ref{prop:a-universal}, for $i=1,\dots, t$, we have $g_i \in G$, $h_i:= 	\frac{\LT(r_{i-1})}{\LT(g_{i})}g_{i}$, and $r_i=f-h_1-\dots-h_i$ is the remainder after the $i$-th division (and $r_0:= f$).  
Also, each term of $h_i$ divides the leading term of $h_i$.

By construction, $g_i | h_i$. So, it suffices to show that there exists $i$ such that $h_i | f$.  

We now claim that $\LT(h_i)|\LT(f)$ holds for all $i$.  We prove this claim by induction on $i$.  For the $i=1$ case, $\LT(h_1)=\LT(f)$.
If $i \geq 2$, then $\LT(h_i)$ is the leading term of:
\begin{align} \label{eq:r}
r_{i-1} = f-h_1-\dots - h_{i-1}~.
\end{align}
We now examine the summands in~\eqref{eq:r}.  
As $f$ is a pseudo-monomial, each term in $f$ divides $\LT(f)$, and the same holds for each remaining summand $h_i$: as noted above, its terms divide $\LT(h_i)$, and thus (by induction hypothesis) divide $\LT(f)$.  So, $\LT(h_i)= \LT(r_{i-1}) | \LT(f)$, proving our claim.

We now assert that $h_i$ is a pseudo-monomial. 
To see this, recall that $h_i$ is the product of a monomial and a pseudo-monomial (namely, $g_i$), so we just need to show that its leading term is square-free.  Indeed, this follows from two facts: $\LT(h_i)| \LT(f)$ and $f$ is a pseudo-monomial.


Hence, $H(h_i) \subseteq P(\sigma \cup \tau)$ for every $i$, because every term in $h_i$ divides $\LT(h_i)$ which in turn divides $x_{\sigma \cup \tau}=\LT(f)$.  
Thus, by Lemma~\ref{lem:hypdiv}, it is enough to show that
$\left| H(h_i) \cap P(\sigma) \right| = 1$  for some $i$ (because this would imply that $h_i|f$).

The sum in~\eqref{eq:f-sum} is over $\mathbb{F}_2$, so the polynomials $f, h_1, \dots, h_t$ together must contain an even number of each term.  We focus now on only those terms with support in $P(\sigma)$. The pseudo-monomial $f$ has only one such term (namely, $x_{\sigma}$).  Thus, 
some $h_{i^*}$ has an odd number of terms in $P(\sigma)$, i.e., $\left| H(h_{i^*}) \cap P(\sigma) \right|$ is odd.  On the other hand, both $H(h_{i^*})$ and $P(\sigma) $ are hypercubes in the Boolean lattice, so their intersection, if nonempty, also is a hypercube and thus has size $2^q$ for some $q \geq 0$.  Hence, $q=0$, so $\left| H(h_{i^*}) \cap P(\sigma) \right|=1$. This completes our proof.

\end{proof}

\subsection{Proof of Theorem~\ref{thm:main}} \label{sec:proof-main-theorem}
 Theorem~\ref{thm:divbyset} allows us to prove that when a canonical form is a Gr\"obner basis, it is reduced:
\begin{prop}\label{thm:reduced}
	If the canonical form of a neural ideal $J_C$ is a Gr\"{o}bner basis of $J_C$, then it is a reduced Gr\"{o}bner basis of $J_C$.
\end{prop}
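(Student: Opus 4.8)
The plan is to directly verify the two conditions defining a reduced Gr\"obner basis for $G := \CF(J_C)$, under the hypothesis that $G$ is a Gr\"obner basis with respect to some monomial ordering $<$ (all leading terms below are taken with respect to $<$). The first condition, that every element of $G$ has leading coefficient $1$, is automatic: every element of $G$ is a pseudo-monomial, and by Lemma~\ref{lem:pseudo} its leading term is the monic monomial $x_{\sigma\cup\tau}$.

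For the second condition — no term of any $f\in G$ is divisible by $\LT(g)$ for another $g\in G$ — I would argue by contradiction. Suppose $f = x_\sigma\prod_{j\in\tau}(1+x_j)$ and $g$ are distinct elements of $G$, and some term $x_\omega$ of $f$ is divisible by $\LT(g)$. Since $f$ is a pseudo-monomial, every one of its terms divides its highest-degree term $\LT(f) = x_{\sigma\cup\tau}$; hence $\LT(g)\mid x_\omega\mid\LT(f)$, so in particular $\LT(g)\mid\LT(f)$. It follows that $\LT(f)$ lies in the ideal generated by the leading terms of the elements of $G\setminus\{f\}$, so those leading terms still generate the leading-term ideal of $J_C$; thus $G\setminus\{f\}$ is itself a Gr\"obner basis of $J_C$ with respect to $<$.

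Now I would apply Theorem~\ref{thm:divbyset} to the pseudo-monomial $f$ and the finite set of pseudo-monomials $G\setminus\{f\}$. Because $f\in J_C$ and $G\setminus\{f\}$ is a Gr\"obner basis, the remainder of $f$ on division by $G\setminus\{f\}$ is $0$, so Theorem~\ref{thm:divbyset} produces some $g'\in G\setminus\{f\}$ with $g'\mid f$. But $g'$ is then a pseudo-monomial of $J_C$ with $g'\neq f$ and $g'\mid f$, contradicting the minimality of $f$ (which holds since $f\in\CF(J_C)$). This contradiction proves the second condition, and hence $G$ is a reduced Gr\"obner basis.

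The only step that is not bookkeeping is passing from ``$\LT(g)$ divides some \emph{term} of $f$'' to ``$\LT(g)$ divides $\LT(f)$''; this uses the characteristic property of pseudo-monomials that all of their terms divide the top term, and it is precisely what makes available the standard manoeuvre that a Gr\"obner basis with a redundant leading term remains a Gr\"obner basis after deleting that element. After that, the argument is just Theorem~\ref{thm:divbyset} together with the built-in minimality of the canonical form. Degenerate cases (e.g.\ $G=\emptyset$ or $G=\{f\}$) need no separate handling, since then the second condition holds vacuously and the contradiction scenario cannot arise.
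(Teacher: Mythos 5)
Your proof is correct and matches the paper's argument essentially step for step: reduce to showing no leading term of one canonical-form element divides a term of another, pass from "divides some term" to "divides the leading term" via the pseudo-monomial structure, delete the redundant element to get a smaller Gr\"obner basis, and then apply Theorem~\ref{thm:divbyset} to contradict minimality of the canonical form. The only additions over the paper are the explicit (and trivial, since the coefficient field is $\mathbb{F}_2$) check of the leading-coefficient condition and the remark about degenerate cases.
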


\begin{proof}
	Suppose for contradiction that $\CF(J_C)$ is a Gr\"{o}bner basis, but not a reduced Gr\"{o}bner basis. 
Then there exist $f, g \in \CF(J_C)$, with $f \neq g$, such that $\LT(g)$ divides some term of $f$.  Thus, $\LT(g)$ divides $\LT(f)$ (because every term in a pseudo-monomial divides the leading term).  Thus, $\CF(J_C)$ and $\CF(J_C) \setminus \{f\}$ both generate the same ideal of leading terms, and hence $\CF(J_C) \setminus \{f\}$ is also a Gr\"obner basis of $J_C$.  It follows that the remainder on division of $f$ by $\CF(J_C) \setminus \{f\}$ is 0, so by Theorem~\ref{thm:divbyset}, there exists $h \in \CF(J_C) \setminus \{f\}$ such that $h | f$.  Hence, $f$ is a non-minimal element of the canonical form, which is a contradiction.
\end{proof}

Now we can prove Theorem~\ref{thm:main}, which states that a canonical form that is a Gr\"obner basis is the universal Gr\"obner basis:
\begin{proof}[Proof of Theorem~\ref{thm:main}]
Follows from Propositions \ref{prop:a-universal} and \ref{thm:reduced}.
\end{proof}

\subsection{Every pseudo-monomial in a reduced Gr\"obner basis is in the canonical form}
In this subsection, we prove the following partial converse of Theorem~\ref{thm:main}: if the universal Gr\"obner basis of a neural ideal consists of only pseudo-monomials, then it equals the canonical form (Theorem~\ref{thm:summary}). 

We first show that every pseudo-monomial in a reduced Gr\"obner basis is in the canonical form.
\begin{prop} \label{prop:pseudo-mon-in-reduced} Let $J_C$ be a neural ideal.
\begin{enumerate}
\item Let $G$ be a reduced Gr\"obner basis of $J_C$.  Then every pseudo-monomial in $G$ is in the canonical form of $J_C$.
\item Let $\widehat{G}$ be the universal Gr\"obner basis of $J_C$.  Then every pseudo-monomial in $\widehat{G}$ is in the canonical form of $J_C$.
\end{enumerate}
\end{prop}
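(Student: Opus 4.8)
The plan is to establish (1) directly and then obtain (2) for free: since $\widehat{G}$ is by definition the union of all reduced Gr\"obner bases of $J_C$, any pseudo-monomial in $\widehat{G}$ belongs to some reduced Gr\"obner basis $G$, and (1) applied to that $G$ places it in $\CF(J_C)$.

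For (1), let $G$ be a reduced Gr\"obner basis of $J_C$ with respect to a monomial ordering $<$, and suppose toward a contradiction that some pseudo-monomial $f = x_\sigma \prod_{j \in \tau}(1+x_j) \in G$ is \emph{not} in $\CF(J_C)$, i.e.\ not a minimal pseudo-monomial of $J_C$. By the definition of a minimal pseudo-monomial there is a pseudo-monomial $g = x_\alpha \prod_{j \in \beta}(1+x_j) \in J_C$ with $g \neq f$ and $g \mid f$. I would then read off two consequences. First, by Lemma~\ref{lem:hypdiv}, $g \mid f$ gives $\alpha \subseteq \sigma$ and $\beta \subseteq \tau$; since $g \neq f$ at least one inclusion is proper, so $\deg g < \deg f$. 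Second, by Lemma~\ref{lem:pseudo}, $\LT(g) = x_{\alpha \cup \beta}$ and $\LT(f) = x_{\sigma \cup \tau}$ regardless of the ordering, so $\LT(g) \mid \LT(f)$ while $\deg \LT(g) = \deg g < \deg f = \deg \LT(f)$.

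Next I invoke the Gr\"obner-basis hypothesis. Because $g \in J_C$, its leading term $\LT(g)$ lies in the monomial ideal $\LT_<(J_C) = \langle \LT(g') \mid g' \in G \rangle$, and a monomial in a monomial ideal is divisible by one of the generators; hence $\LT(g') \mid \LT(g)$ for some $g' \in G$. Chaining, $\LT(g') \mid \LT(f)$, so $\LT(g')$ divides a term of $f$. Also $\deg \LT(g') \le \deg \LT(g) < \deg \LT(f)$, so $\LT(g') \neq \LT(f)$ and therefore $g' \neq f$. Thus $G$ contains two distinct elements $f$ and $g'$ with $\LT(g')$ dividing a term of $f$, contradicting reducedness. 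This forces $f \in \CF(J_C)$, proving (1).

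I expect the one delicate point — the crux of the argument — to be verifying that the Gr\"obner-basis element $g'$ extracted from the leading-term ideal is genuinely distinct from $f$, since that distinctness is precisely what makes the reducedness axiom applicable; this is handled by the strict degree drop $\deg g < \deg f$ for a proper pseudo-monomial divisor, which in turn rests on the combinatorial characterization of pseudo-monomial divisibility in Lemma~\ref{lem:hypdiv}. (It is tempting to try to deduce (1) from Theorem~\ref{thm:divbyset} instead, but that result divides by a finite set of \emph{pseudo-monomials}, whereas a reduced Gr\"obner basis can contain non-pseudo-monomials; a Theorem~\ref{thm:divbyset}-based route would first have to pass through the canonical form, which the argument above sidesteps.)
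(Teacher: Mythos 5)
Your argument is correct and follows essentially the same route as the paper's proof: assume a pseudo-monomial $f \in G$ is not minimal, take a proper pseudo-monomial divisor $g \in J_C$ of strictly smaller degree, pull out a $g' \in G$ with $\LT(g') \mid \LT(g) \mid \LT(f)$, and conclude $g' \neq f$ by the degree drop, contradicting reducedness; part (2) then follows since the universal Gr\"obner basis is the union of the reduced ones. The only cosmetic difference is that you route the degree drop through Lemma~\ref{lem:hypdiv}, whereas the paper takes $\deg g < \deg f$ as immediate from $g \mid f$ and $g \neq f$, and your closing remark about why Theorem~\ref{thm:divbyset} cannot be applied directly is accurate but not part of the paper's argument.
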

\begin{proof}
Let $f$ be a pseudo-monomial in $G$.  Suppose that $f$ is {\em not} a minimal pseudo-monomial in $J_C$:
for some pseudo-monomial $h\in J_C$ such that deg($h$)\textless deg($f$), $h|f$. 
Then for some $g\in G$, $\LT(g)|\LT(h)$. Hence, $\LT(g) | \LT(f)$ (because $\LT(h) | \LT(f)$) and also $g\neq f$ (because $\deg(g) \leq \deg(h) < \deg(f)$). 
This is a contradiction: $f$ and $g$ cannot both be in a reduced Gr\"{o}bner basis. 

Finally, (2) follows directly from (1).
\end{proof}


\begin{thm} \label{thm:summary}
Let $J_C$ be a neural ideal.  The following are equivalent:
\begin{enumerate}
\item the canonical form of $J_C$ is a Gr\"obner basis of $J_C$,
\item the canonical form of $J_C$ is {the} universal Gr\"obner basis  of $J_C$, and
\item {the} universal Gr\"obner basis of $J_C$ consists of pseudo-monomials.
\end{enumerate}
\end{thm}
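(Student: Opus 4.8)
The plan is to prove the three statements equivalent by establishing the cycle $(1)\Rightarrow(2)\Rightarrow(3)\Rightarrow(1)$, and two of the three implications will be essentially immediate. The implication $(1)\Rightarrow(2)$ is exactly Theorem~\ref{thm:main}: once $\CF(J_C)$ is known to be a \gb\ of $J_C$ with respect to some monomial ordering, Theorem~\ref{thm:main} upgrades this to the statement that $\CF(J_C)$ is \emph{the} universal \gb\ of $J_C$. The implication $(2)\Rightarrow(3)$ is immediate from the definition of the canonical form, since $\CF(J_C)$ is by construction a set of (minimal) pseudo-monomials; hence if it coincides with the universal \gb, then the universal \gb\ consists of pseudo-monomials.

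The only implication requiring an argument is $(3)\Rightarrow(1)$. Let $\widehat{G}$ denote the universal \gb\ of $J_C$, and suppose every element of $\widehat{G}$ is a pseudo-monomial. I would first invoke Proposition~\ref{prop:pseudo-mon-in-reduced}(2), which states that every pseudo-monomial in $\widehat{G}$ belongs to $\CF(J_C)$; since by hypothesis \emph{every} element of $\widehat{G}$ is a pseudo-monomial, this yields the containment $\widehat{G}\subseteq\CF(J_C)$. Now $\CF(J_C)$ is a subset of $J_C$ (indeed a generating set) that contains the \gb\ $\widehat{G}$. Fixing any monomial ordering $<$, the leading terms of the elements of $\widehat{G}$ generate $\LT_<(J_C)$; a fortiori the leading terms of the (larger) set $\CF(J_C)$ generate an ideal containing $\LT_<(J_C)$, and this ideal is also contained in $\LT_<(J_C)$ since each such leading term lies in $J_C$. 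Hence $\CF(J_C)$ is a \gb\ of $J_C$ with respect to $<$; as $<$ was arbitrary, $\CF(J_C)$ is a \gb, establishing~(1).

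I do not anticipate a genuine obstacle here: the statement is largely a repackaging of Theorem~\ref{thm:main} and Proposition~\ref{prop:pseudo-mon-in-reduced}. If there is a delicate point, it lies in $(3)\Rightarrow(1)$, and it reduces to the elementary observation that any subset of an ideal $I$ that contains a \gb\ of $I$ is itself a \gb\ of $I$ — applied to the containment $\widehat{G}\subseteq\CF(J_C)$ furnished by Proposition~\ref{prop:pseudo-mon-in-reduced}(2).
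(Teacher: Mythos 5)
Your proposal is correct and follows essentially the same route as the paper: $(1)\Rightarrow(2)$ via Theorem~\ref{thm:main}, $(2)\Rightarrow(3)$ by the definition of the canonical form, and $(3)\Rightarrow(1)$ by using Proposition~\ref{prop:pseudo-mon-in-reduced}(2) to obtain $\widehat{G}\subseteq\CF(J_C)$ and then observing that a finite subset of $J_C$ containing a Gr\"obner basis is itself a Gr\"obner basis. The only cosmetic difference is that you organize the argument as a cycle, while the paper proves $(1)\Leftrightarrow(2)$ and separately $(2)\Rightarrow(3)\Rightarrow(1)$.
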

\begin{proof}
The implication (1)$\Rightarrow$(2) is Theorem~\ref{thm:main}, and both (1)$\Leftarrow$(2) and (2)$\Rightarrow$(3) are clear.  For (3)$\Rightarrow$(1), assume that the universal Gr\"obner basis $\widehat{G}$ consists of pseudo-monomials.  Then, by Proposition~\ref{prop:pseudo-mon-in-reduced}(2), $\widehat{G}$ is contained in the canonical form of $J_C$.  Thus, the canonical form contains a Gr\"obner basis of $J_C$ (namely, $\widehat{G}$) and hence is itself a Gr\"obner basis. 
\end{proof}

\begin{remark} \label{rem:universal}
Suppose we want to know whether a code's canonical form is a Gr\"obner basis.  
Theorem~\ref{thm:summary} tells us how to do so {\em without} computing the canonical form: compute the universal Gr\"obner basis, and then check whether it contains only pseudo-monomials.  See Example~\ref{ex:universal-code}.

Under certain conditions, e.g. small number of neurons, 
computing the Gr\"obner basis is more efficient than computing the canonical form,
but 
is there some way to avoid computations entirely and yet still decide whether the canonical form is a Gr\"obner basis?
In the next section, we give conditions under which we can resolve this decision problem quickly.
\end{remark}

\begin{example} \label{ex:universal-code}
Consider the neural code $C = \{0100,0101,0111\}$. 
The universal Gr\"obner basis of $J_C$ is 
$\widehat{G} = \{ x_3(x_4 + 1) ,~ x_2 + 1,~ x_1\}$, 
so it
contains only 
pseudo-monomials.  
Thus, by Theorem~\ref{thm:summary}, 
$\widehat{G}$ is the canonical form. 
\end{example}

\begin{example} \label{ex:not-universal-code}
Consider the neural code $C = \{0101,1100,1110\}$.  
The universal Gr\"obner basis of $J_C$ is 
$\widehat{G} = \{x_4 x_3,~ x_3  (x_1 + 1),~ x_1 + x_4 + 1,~ x_2 + 1\}$,
which contains the non-pseudo-monomial $x_1+x_4+1$.  
Thus, by Theorem \ref{thm:summary}, the canonical form is not a universal Gr\"obner basis of $J_C$.
Indeed, the canonical form is 
${\rm CF}(J_C)=\{x_3  (x_1 + 1),~ x_2 + 1,~ (x_4 + 1)  (x_1 + 1),~ x_4  x_1,~ x_4 x_3\}$, and, 
for a monomial ordering where $x_4>x_1$, the leading term 
of 
the non-pseudo-monomial $x_1+x_4+1$ is $x_4$, which 
is {\em not} divisible by any of the leading terms from the canonical form.  
\end{example}

\section{When is the canonical form a Gr\"obner basis?}\label{sec:gb}


In this section we present some results that partially solve the question of when is the canonical form a Gr\"obner basis for the neural ideal. A complete answer to this question is not only of theoretical interest but perhaps also of practical relevance. Extensive computations suggest that, under certain conditions, Gr\"obner bases of neural ideals can be computed more efficiently than canonical forms. This is true for small neural codes.
Moreover, the iterative nature of the newer canonical form algorithm hints towards the ability to compute canonical forms and Gr\"obner bases of neural codes in large dimensions by `gluing' those of codes on small dimensions. Such decomposition results are a common theme in other areas of applied algebraic geometry 
such as algebraic statistics and phylogenetic algebraic geometry 
\cite{AR08, EKS14}.



Table \ref{runtimes} displays a runtime comparison between the iterative canonical form algorithm described in \cite{neural-ideal-sage} and a specialized Gr\"obner basis algorithm for Boolean rings implemented in SageMath based on the work in \cite{polybori}. We report the mean time (in seconds) 
of 100 randomly generated codes on $n$ neurons
for 
$n = 4, \dots, 8$. More precisely, for each code, a number $m$ was chosen uniformly at random from $\{1,\dots, 2^n-1 \}$ and then $m$ codewords were chosen at random.
These computations were performed on SageMath 7.2 running on a Macbook Pro with a 2.8 GHz Intel Core i7 processor and 16 GB of memory. 

\begin{table}[!hbtp]
\centering
\begin{tabular}{|l|l|l|l|l|l|} 
\hline
Dimension & 4 & 5 & 6 & 7 & 8 \\ \hline 
Canonical form & 0.0016 & 0.0076 & 0.108 & 0.621 & 1.964 \\ 
\hline
Gr\"obner basis & 0.00147 & 0.00202 & 0.00496 & 0.01604 & 0.16638  \\ \hline
\end{tabular}
    \caption{Runtime comparison of canonical form versus Gr\"obner basis computations.}
\label{runtimes}
\end{table}

For codes on a larger number of neurons, our computations indicate that in general Gr\"obner bases computations are still more efficient than canonical form computations. However, even in the case of $n=9$ neurons we found codes whose Gr\"obner bases took over 6 hours to be computed.


\begin{prop} \label{prop:size-of-C}
Let $C$  be a neural code on $n$ neurons. If $|C| = 1$ or $|C| = 2^n-1$, then the canonical form of $J_C$ is the universal Gr\"obner basis of $J_C$. 
\end{prop}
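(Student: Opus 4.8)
The plan is to handle the two cases $|C| = 1$ and $|C| = 2^n - 1$ separately, since the corresponding neural ideals look quite different, and in each case to exhibit the canonical form explicitly and verify that it is already a (reduced) Gröbner basis; Theorem~\ref{thm:main} then upgrades it to \emph{the} universal Gröbner basis.

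For the case $|C| = 2^n - 1$, write $C = \{0,1\}^n \setminus \{v\}$ for the unique missing vector $v$. Then $J_C = \langle \rho_v \rangle$, and since $\rho_v$ is itself a pseudo-monomial whose only pseudo-monomial divisors would have strictly larger varieties (hence could not lie in $J_C$), one checks that $\CF(J_C) = \{\rho_v\}$. A single-element set $\{\rho_v\}$ is trivially a Gröbner basis of the principal ideal $\langle \rho_v\rangle$ with respect to any monomial ordering: by Lemma~\ref{lem:pseudo} the leading term of $\rho_v$ is its top-degree term $x_{[n]}$ regardless of the ordering, and this generates $\LT_<(\langle \rho_v\rangle)$ because every element of the ideal is a polynomial multiple of $\rho_v$. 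Hence the hypothesis of Theorem~\ref{thm:main} is met and $\CF(J_C)$ is the universal Gröbner basis. (One should double-check the edge behaviour: even when $\supp(v) = \emptyset$ or $v$ has full support, $\rho_v$ is still a pseudo-monomial and the argument goes through.)

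For the case $|C| = 1$, say $C = \{v\}$. Here the non-codewords are all $2^n - 1$ vectors other than $v$, so $J_C = \langle \rho_w \mid w \neq v\rangle$, and one expects $\CF(J_C) = \{\, x_i \mid v_i = 0 \,\} \cup \{\, 1 + x_i \mid v_i = 1 \,\}$ — that is, the linear pseudo-monomials $x_i - v_i$. The cleanest route is to observe that $I(\{v\}) = \langle x_i - v_i \mid i \in [n]\rangle$ is a maximal ideal, and that by Lemma~\ref{lemma_3.2} we have $I(\{v\}) = J_C + \langle x_i(1+x_i)\rangle$; since each $x_i(1+x_i)$ is divisible by the pseudo-monomial $x_i - v_i \in J_C$, the Boolean relations add nothing, giving $J_C = \langle x_i - v_i \mid i\in[n]\rangle$. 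These $n$ linear polynomials have pairwise-coprime (distinct single-variable) leading terms under any monomial ordering, so by Buchberger's criterion (all $S$-polynomials reduce to $0$) they form a reduced Gröbner basis for every ordering. Finally, minimality: no pseudo-monomial properly divides a degree-one pseudo-monomial $x_i - v_i$ except the constant, which is not in the (proper) ideal $J_C$, so $\CF(J_C)$ is exactly this set and the hypothesis of Theorem~\ref{thm:main} applies.

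The main obstacle is the identification of the canonical form in the $|C| = 1$ case — specifically, confirming that no lower-degree pseudo-monomial can divide the $x_i - v_i$, and that these linear polynomials actually generate $J_C$ (not merely $I(C)$) so that they are genuinely \emph{in} $J_C$ and hence eligible to be canonical-form elements. Once $J_C$ is known to be this complete-intersection linear ideal, the Gröbner-basis verification via coprime leading terms is routine, and the $|C| = 2^n-1$ case is essentially immediate. An alternative that avoids computing $\CF(J_C)$ altogether would be to invoke Theorem~\ref{thm:summary}(3): show directly that the universal Gröbner basis consists of pseudo-monomials. For $|C|=1$ the ideal is generated by these linear pseudo-monomials which already form the universal Gröbner basis; for $|C| = 2^n-1$ the principal ideal $\langle \rho_v\rangle$ has $\{\rho_v\}$ as its unique reduced Gröbner basis in every ordering, so the universal Gröbner basis is $\{\rho_v\}$, a pseudo-monomial. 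Either way the conclusion follows.
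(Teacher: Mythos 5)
Your overall strategy is the same as the paper's: in each of the two cases, write down $J_C$ and its natural generating set explicitly and observe that this set is simultaneously the canonical form and the universal Gr\"obner basis. The paper's proof is a three-line version of this; you spell out the Gr\"obner-basis verifications (Buchberger with pairwise-coprime leading terms for $|C|=1$, principal ideal for $|C|=2^n-1$) and the minimality checks, which is a reasonable elaboration. Your alternative route via Theorem~\ref{thm:summary}(3) is also sound.

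The one genuine gap, which you yourself flag but do not close, is in the $|C|=1$ case: you need to show that $x_i - v_i$ actually lies in $J_C$, not merely in $I(C)$. As written, your sentence ``each $x_i(1+x_i)$ is divisible by the pseudo-monomial $x_i - v_i \in J_C$'' assumes the containment you are trying to establish, so the deduction $J_C = \langle x_i - v_i \mid i\in[n]\rangle$ from Lemma~\ref{lemma_3.2} is not yet supported. (Lemma~\ref{lemma_3.2} alone gives $J_C \subseteq I(\{v\}) = \langle x_i - v_i\rangle$; the reverse containment is the issue.) The gap is easy to close: in $\mathbb{F}_2[x_1,\dots,x_n]$ one has the identity
\[
x_i - v_i \;=\; \sum_{\substack{w\in\{0,1\}^n \\ w_i \neq v_i}} \rho_w,
\]
which follows from factoring the sum $\sum_{w\in\{0,1\}^n}\rho_w = \prod_{j}\bigl((1+x_j)+x_j\bigr)=1$ and, more precisely, the computation $\sum_{w : w_i = 1}\rho_w = x_i$ and $\sum_{w : w_i = 0}\rho_w = 1 + x_i$. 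Every $w$ in this sum satisfies $w \neq v$, so each $\rho_w$ is a generator of $J_C$ and hence $x_i - v_i \in J_C$. With this in hand your argument (and the paper's) is complete. To be fair, the paper also states ``Lemma~\ref{lemma_3.2} implies $J_C = \langle x_1-c_1,\dots,x_n-c_n\rangle$'' without filling in this step, so the omission is in the same spirit as the original; you just should not leave it as an acknowledged open obstacle when it has a one-line fix.
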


\begin{proof}
If $C = \{c\}$, then Lemma \ref{lemma_3.2} implies that $J_C = \langle x_1-c_1,~ x_2-c_2,~\dots,~ x_n-c_n\rangle$. When $|C| = 2^n-1$, then by definition $J_C = \langle \rho_v \rangle$ for the unique $v\notin C$. In either case, the indicated generating set  is both the canonical form and the universal Gr\"obner basis of $J_C$.
\end{proof}

A set of subsets $\Delta\subseteq 2^{[n]}$ is an (abstract) \textbf{simplicial complex} if $\sigma \in \Delta$ and $\tau \subseteq \sigma$ implies $\tau \in \Delta$. A neural code $C$ is a simplicial complex if its support $\mathrm{supp}(C)$ is a simplicial complex.   

\begin{prop}
If  $C$ is a simplicial complex, then the canonical form of $J_C$ 
is the universal Gr\"obner basis of $J_C$.
\end{prop}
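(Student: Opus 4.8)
The plan is to show that when $C$ is a simplicial complex, the neural ideal $J_C$ is a \emph{monomial} ideal, and then to conclude via Theorem~\ref{thm:summary}. Write $\Delta := \supp(C)$, and let $I_\Delta := \langle x_\sigma \mid \sigma \subseteq [n],\ \sigma \notin \Delta \rangle \subseteq \biring$ be the Stanley--Reisner ideal of $\Delta$; recall that $I_\Delta$ is equally generated by the monomials $x_\sigma$ as $\sigma$ ranges over the minimal non-faces of $\Delta$. The heart of the argument is to prove the equality $J_C = I_\Delta$.

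First I would verify $J_C \subseteq I_\Delta$: every generator $\rho_v$ of $J_C$ has $v \notin C$, hence $\supp(v) \notin \Delta$, and since $\rho_v = x_{\supp(v)} \prod_{j \notin \supp(v)}(1+x_j)$ is divisible by $x_{\supp(v)} \in I_\Delta$, we get $\rho_v \in I_\Delta$. For the reverse containment, I would show $x_\sigma \in J_C$ for \emph{every} non-face $\sigma \notin \Delta$ via the polynomial identity
\[
x_\sigma ~=~ \sum_{\{v \in \{0,1\}^n \,\mid\, \supp(v) \supseteq \sigma\}} \rho_v~.
\]
This holds because both sides are square-free (each $\rho_v$ is, as $\supp(v)$ and its complement are disjoint, and a sum of square-free polynomials over $\F_2$ is again square-free), and, since $\rho_v$ is the characteristic function of $v$, both sides take the value $1$ at precisely the points $x$ with $\supp(x) \supseteq \sigma$; and two square-free polynomials over $\F_2$ agreeing as functions on $\{0,1\}^n$ are equal. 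The key observation is that when $\sigma \notin \Delta$ and $\supp(v) \supseteq \sigma$, then $\supp(v) \notin \Delta$ as well (because $\Delta$ is closed under passing to subsets), so every $\rho_v$ occurring in the sum is a generator of $J_C$; hence $x_\sigma \in J_C$. This yields $I_\Delta \subseteq J_C$, so $J_C = I_\Delta$.

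With $J_C = I_\Delta$ in hand, the conclusion is routine. A monomial ideal has a unique minimal monomial generating set, and (since the leading term of a monomial is the monomial itself, and by minimality no generator divides a term of another) this set is the reduced Gr\"obner basis with respect to \emph{every} monomial order; hence it is \underline{the} universal Gr\"obner basis of $J_C$. In particular the universal Gr\"obner basis of $J_C$ consists of monomials, which are pseudo-monomials, so by the implication (3)$\Rightarrow$(2) of Theorem~\ref{thm:summary} the canonical form of $J_C$ equals the universal Gr\"obner basis, as desired. I expect the only genuine obstacle to be the displayed identity --- specifically, checking that it is an identity of \emph{polynomials} rather than merely of functions on $\{0,1\}^n$, which is exactly where the square-free/multilinearity observation enters; everything downstream of $J_C = I_\Delta$ is standard.
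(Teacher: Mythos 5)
Your argument is correct and follows the same overall route as the paper: reduce to the observation that $J_C$ is the Stanley--Reisner ideal of $\supp(C)$ and then invoke the monomial-ideal case. The one substantive difference is that the paper simply cites \cite[Lemma~4.4]{neural_ring} for the equality $J_C = I_\Delta$, whereas you supply a self-contained proof via the interpolation identity $x_\sigma = \sum_{\supp(v)\supseteq\sigma}\rho_v$ together with the multilinear-functions argument; this is a clean way to re-derive that lemma, and your use of the downward-closure of $\Delta$ to ensure every $\rho_v$ in the sum is a generator of $J_C$ is exactly the right point. Your closing step goes through Theorem~\ref{thm:summary}~(3)$\Rightarrow$(2), while the paper asserts directly that the minimal monomial generators are both the canonical form and the universal Gr\"obner basis; either is fine, since for a monomial ideal $I$ one has $\LT_<(I)=I$ for every order, so the unique minimal monomial generating set is the reduced Gr\"obner basis for every order, and it coincides with the set of minimal pseudo-monomials of $I$.
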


\begin{proof}
If $C$ is a simplicial complex, then $J_C$ is a monomial ideal generated by the minimal Type~1 relationships (indeed, it is the Stanley-Reisner ideal of the simplicial complex $\mathrm{supp}(C)$)~\cite[Lemma 4.4]{neural_ring}.  These minimal Type-1 relationships comprise the canonical form of $J_C$, and also form the universal Gr\"obner basis of $J_C$.
%
\end{proof}

The next result gives 
conditions that guarantee that the canonical form is not a 
Gr\"obner basis. 

\begin{prop}
Let $\mathcal{U}=\{U_i\}_{i=1}^n$ be a collection of 
    sets in a stimulus space~$X$, and 
	let $C=C(\mathcal{U})$ denote the corresponding receptive field code. 
    If one of the following conditions hold, then the canonical form of $J_C$ is {\em not} a Gr\"obner basis of $J_C$:
\begin{enumerate}
\item[(1)] Two proper, nonempty receptive fields coincide:
$\emptyset \neq U_i = U_j \subsetneq X$ for some $i\neq j \in [n]$. 
\item[(2)] Two nonempty receptive fields are complementary:
$U_i = X \setminus U_j$ for some $i\neq j \in [n]$ with $U_i \neq \emptyset$ and $U_j \neq \emptyset$.
\end{enumerate}
\end{prop}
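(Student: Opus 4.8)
The plan is to invoke Theorem~\ref{thm:summary}: the canonical form of $J_C$ fails to be a Gr\"obner basis as soon as the universal Gr\"obner basis contains some polynomial that is not a pseudo-monomial. So in each of the two cases I would exhibit an explicit $p\in J_C$ whose leading term, with respect to a suitably chosen monomial ordering, is the single variable $x_i$, and then argue that a reduced Gr\"obner basis for that ordering is forced to contain a non-pseudo-monomial.

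I would first record two observations valid in both cases. Since $U_i\neq\emptyset$, picking $x\in U_i$ and setting $S=\{k\in[n]\mid x\in U_k\}$ (so $i\in S$) shows that the codeword supported on $S$ lies in $C(\mathcal{U})$; hence $C\neq\emptyset$, and therefore $1\notin J_C$ (equivalently $1\notin \LT_<(J_C)$ for any ordering). Second, in both cases $\emptyset\neq U_i\subsetneq X$ — in case~(2), $U_i=X\setminus U_j\subsetneq X$ because $U_j\neq\emptyset$ — so by the Type~1 and Type~3 relationships following Lemma~\ref{lem:receptivefields}, \emph{neither} $x_i$ \emph{nor} $1+x_i$ lies in $J_C$.

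Next I would produce $p$. In case~(1), $U_i=U_j$ gives both $U_i\subseteq U_j$ and $U_j\subseteq U_i$, so Lemma~\ref{lem:receptivefields} puts $x_i(1+x_j)$ and $x_j(1+x_i)$ in $J_C$; their sum over $\mathbb{F}_2$ is $p:=x_i+x_j\in J_C$. In case~(2), $U_i\cap U_j=\emptyset$ gives $x_ix_j\in J_C$ (Type~1) and $U_i\cup U_j=X$ gives $(1+x_i)(1+x_j)\in J_C$ (Type~3), and their sum is $p:=1+x_i+x_j\in J_C$. Now fix any monomial ordering $<$ with $x_i>x_j$, e.g.\ lex. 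Then $\LT_<(p)=x_i$, so $x_i\in\LT_<(J_C)$; since $1\notin J_C$, the degree-one monomial $x_i$ is a minimal generator of the monomial ideal $\LT_<(J_C)$, so the reduced Gr\"obner basis $G$ of $J_C$ with respect to $<$ contains an element $g$ with $\LT_<(g)=x_i$. If $g$ were a pseudo-monomial $x_\alpha\prod_{k\in\beta}(1+x_k)$, then Lemma~\ref{lem:pseudo} would force $\alpha\cup\beta=\{i\}$, hence $g\in\{x_i,\,1+x_i\}$ — but neither lies in $J_C$. So $g$ is a non-pseudo-monomial in a reduced (hence in the universal) Gr\"obner basis of $J_C$, and Theorem~\ref{thm:summary} gives the conclusion.

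The argument is a direct deduction from the tools already in place, so I do not anticipate a genuine obstacle. The only steps requiring a little care are (a) the membership claims $x_i(1+x_j),\,x_j(1+x_i),\,x_ix_j,\,(1+x_i)(1+x_j)\in J_C$, which follow verbatim from the receptive-field dictionary of Lemma~\ref{lem:receptivefields}; and (b) the observation that a degree-one monomial lying in $\LT_<(J_C)$ is automatically one of the minimal generators of that ideal — this is where $C\neq\emptyset$ (hence $1\notin J_C$) is used — so that the reduced Gr\"obner basis truly contains an element with leading term $x_i$.
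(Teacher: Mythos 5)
Your proof is correct, but it takes a genuinely different route from the paper's. The paper's proof stays entirely inside the canonical form: it observes that in case~(1) the two pseudo-monomials $f=x_i(1+x_j)$ and $g=x_j(1+x_i)$ (and in case~(2), $f=x_ix_j$ and $g=(1+x_i)(1+x_j)$) are both \emph{minimal} in $J_C$ — hence both lie in $\CF(J_C)$ — yet share the leading term $x_ix_j$ under every ordering (Lemma~\ref{lem:pseudo}). That violates the reducedness condition, so $\CF(J_C)$ is not a reduced Gr\"obner basis, and Proposition~\ref{thm:reduced} then rules out its being a Gr\"obner basis at all. You instead form $p=f+g$ (which is $x_i+x_j$ or $1+x_i+x_j$), pick an ordering with $x_i>x_j$ so that $\LT(p)=x_i$, argue that $x_i$ is a minimal generator of $\LT_<(J_C)$ (using $1\notin J_C$), and conclude that the reduced Gr\"obner basis for that ordering must contain an element with leading term $x_i$ — which cannot be a pseudo-monomial because $x_i,1+x_i\notin J_C$. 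Then Theorem~\ref{thm:summary} finishes. Both arguments rely on $\emptyset\neq U_i\subsetneq X$: the paper uses it to establish the minimality of $f$ and $g$, you use it to rule out $x_i$ and $1+x_i$ from $J_C$. The paper's route is shorter and stays ordering-agnostic; yours is more constructive, actually exhibiting the non-pseudo-monomial element of $J_C$ responsible for the failure, which connects naturally to the Type~4/5/6 receptive-field relationships of Section~\ref{sec:new-RF}.
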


\begin{proof} (1)
Suppose $U_i, U_j \in \mathcal{U}$ are two sets with $\emptyset \neq U_i = U_j \subsetneq X$. 
By Lemma \ref{lem:receptivefields}, both $f = x_i(x_j+1)$ and $g = x_j(x_i+1)$ are in $J_C$. In fact, $f$ and $g$ are minimal pseudo-monomials in $J_C$ (because $\emptyset \neq U_i=U_j \neq X$), so $f, g \in \mathrm{CF}(J_C)$. Under any monomial ordering, $\mathrm{LT}(f) = \mathrm{LT}(g) = x_ix_j$ (by Lemma~\ref{lem:pseudo}), so the set $\mathrm{CF}(J_C)$ is not reduced and thus cannot be a reduced Gr\"obner basis. 
Hence, by Proposition \ref{thm:reduced}, $\mathrm{CF}(J_C)$ cannot be a Gr\"obner basis.

(2) 
Now assume that $U_i = X \setminus U_j$ for some $i\neq j \in [n]$, 
with $U_i \neq \emptyset$ and $U_j \neq \emptyset$.
Thus, $U_i\cap U_j = \emptyset$ and $U_i \cup U_j = X$, so Lemma \ref{lem:receptivefields} implies that $f = x_ix_j$ and $g = (x_i+1)(x_j+1)$ are
in $J_C$.  Now we proceed as in the previous paragraph: $f$ and $g$ are minimal 
pseudo-monomials in $\mathrm{CF}(J_C)$, and  $\mathrm{LT}(f) = \mathrm{LT}(g) = x_ix_j$, 
so, by Proposition \ref{thm:reduced}, $\mathrm{CF}(J_C)$ cannot be a Gr\"obner basis.
%
%
\end{proof}

The last result in this section concerns a class of codes that we call \textbf{complement-complete}. 

\begin{definition}
The \textbf{complement} of 
$c \in \{0,1\}^n$ is the codeword $\overline{c} \in \{0,1\}^n$ defined by $\overline{c}_i = 1$ if and only if $c_i = 0$. 
A neural code $C$ 
is \textbf{complement-complete} if for all $c \in C$, then $\overline{c}$ is also in $C$.
\end{definition}

\begin{example}\label{ex:comp-compcode}
The complement of the codeword $c_1=1000$ is $\overline{c_1}=0111$, and the complement of $c_2=1010$ is $\overline{c_2}=0101$.  Thus, the code $C=\{1000,0111,1010,0101\}$ is complement-complete.
\end{example}

\begin{definition}
The \textbf{complement} of a pseudo-monomial $f = x_{\sigma}\prod_{i\in \tau}(1+x_i)$ 
is the pseudo-monomial $\overline{f} = x_\tau\prod_{j\in \sigma}(1+x_j)$.
\end{definition}

\begin{lemma} \label{lem:bar}
Consider pseudo-monomials $f = x_{\sigma}\prod_{i\in \tau}(1+x_i)$ and $g = x_{\sigma'}\prod_{i\in \tau'}(1+x_i)$. If $f$ divides $g$, then $\overline{f}$ divides $\overline{g}$.
\end{lemma}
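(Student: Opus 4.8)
The plan is to reduce the statement about complements of pseudo-monomials directly to the combinatorial divisibility criterion in Lemma~\ref{lem:hypdiv}(2). Write $f = x_\sigma \prod_{i \in \tau}(1+x_i)$ and $g = x_{\sigma'}\prod_{i \in \tau'}(1+x_i)$, so that by definition $\overline{f} = x_\tau \prod_{j \in \sigma}(1+x_j)$ and $\overline{g} = x_{\tau'}\prod_{j \in \sigma'}(1+x_j)$. The key observation is simply that taking the complement of a pseudo-monomial swaps the roles of the two index sets: the ``monomial part'' $\sigma$ becomes the ``$(1+x_j)$-part'', and vice versa.

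First I would apply Lemma~\ref{lem:hypdiv} (the equivalence of (1) and (2)) to the hypothesis $f \mid g$: this gives $\sigma \subseteq \sigma'$ and $\tau \subseteq \tau'$. Next I would observe that $\overline{f}$ and $\overline{g}$ are genuine pseudo-monomials — that is, their two index sets are disjoint — since $\tau \cap \sigma = \emptyset$ and $\tau' \cap \sigma' = \emptyset$ are exactly the disjointness conditions already built into $f$ and $g$. Then, to apply Lemma~\ref{lem:hypdiv}(2) in the direction (2) $\Rightarrow$ (1) for the pair $\overline{f}, \overline{g}$, I need the two containments with the roles of the index sets reversed: the ``monomial index set'' of $\overline{f}$ is $\tau$ and that of $\overline{g}$ is $\tau'$, and I have $\tau \subseteq \tau'$; the ``$(1+x_j)$ index set'' of $\overline{f}$ is $\sigma$ and that of $\overline{g}$ is $\sigma'$, and I have $\sigma \subseteq \sigma'$. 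These are precisely condition~(2) of Lemma~\ref{lem:hypdiv} applied to $\overline{f}$ dividing $\overline{g}$, so the equivalence (2) $\Rightarrow$ (1) yields $\overline{f} \mid \overline{g}$, completing the proof.

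There is essentially no obstacle here: the content is entirely bookkeeping, and the only thing to be slightly careful about is matching up which index set plays which role when invoking Lemma~\ref{lem:hypdiv} — the statement of that lemma uses $(\sigma, \tau)$ for the dividend and $(\alpha, \beta)$ for the divisor, so one must instantiate it with the divisor being $\overline{f}$ (index sets $\tau, \sigma$ in that order) and the dividend being $\overline{g}$ (index sets $\tau', \sigma'$). Alternatively, one could give an even quicker argument purely in the hypercube picture via Remark~\ref{rem:hypercube}: the hypercube $H(\overline{f})$ is the interval $[\tau, \tau \cup \sigma]$, which is the ``reflection'' of $H(f) = [\sigma, \sigma \cup \tau]$ across the top face $\sigma \cup \tau$, and reflection is an order-preserving bijection of the relevant sublattice, so containment of hypercubes (together with the single-vertex condition of Lemma~\ref{lem:hypdiv}(4)) is preserved. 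Either route is short; I would go with the index-set computation since it avoids introducing the reflection map formally.
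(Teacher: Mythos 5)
Your proof is correct and uses exactly the paper's approach: reduce to the index-set characterization of pseudo-monomial divisibility in Lemma~\ref{lem:hypdiv}, then observe that complementation simply swaps the roles of the $\sigma$- and $\tau$-index sets, so the containments are preserved. (As a side note, the paper's one-line proof states the containments backwards as $\sigma' \subseteq \sigma$ and $\tau' \subseteq \tau$; your direction, $\sigma \subseteq \sigma'$ and $\tau \subseteq \tau'$, is the correct one since $f$ is the divisor.)
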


\begin{proof}
This follows from the fact that
$f\mid g$ if and only if $\sigma'\subseteq \sigma$ and $\tau' \subseteq \tau$ (Lemma \ref{lem:hypdiv}). 
\end{proof}

\begin{prop} \label{prop:complement-complete}
Let $C$ be a code on $n$ neurons, with $C \subsetneq \{0,1\}^n$. If $C$ is complement-complete, then the canonical form of $J_C$ is {\em not} a Gr\"obner basis of $J_C$.  
\end{prop}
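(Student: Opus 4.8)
The plan is to exploit a symmetry of $J_C$ forced by complement-completeness, and then invoke Proposition~\ref{thm:reduced}. Consider the ring automorphism $\phi\colon \mathbb{F}_2[x_1,\dots,x_n]\to\mathbb{F}_2[x_1,\dots,x_n]$ determined by $x_i\mapsto 1+x_i$. It is an involution (since $\phi(1+x_i)=x_i$), and a one-line computation gives $\phi(\rho_v)=\rho_{\overline v}$ for every $v\in\{0,1\}^n$. Hence $\phi(J_C)=\langle \rho_{\overline v}\mid v\notin C\rangle = J_{\overline C}$, where $\overline C:=\{\overline c\mid c\in C\}$; and since $C$ is complement-complete we have $\overline C=C$, so $\phi(J_C)=J_C$.

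Next I would record that $\phi$ acts on pseudo-monomials exactly as the complement operation of Lemma~\ref{lem:bar}: because $\phi$ swaps $x_i\leftrightarrow 1+x_i$, it sends $f=x_\sigma\prod_{i\in\tau}(1+x_i)$ to $x_\tau\prod_{j\in\sigma}(1+x_j)=\overline f$, so in particular $\phi$ restricts to an involution of the set of pseudo-monomials. As a ring automorphism fixing $J_C$, the map $\phi$ preserves divisibility and membership in $J_C$, hence it permutes the minimal pseudo-monomials of $J_C$. Therefore $f\in\mathrm{CF}(J_C)$ if and only if $\overline f\in\mathrm{CF}(J_C)$. (The same conclusion can be reached combinatorially from Lemma~\ref{lem:bar} together with the characterization of pseudo-monomials lying in $J_C$, but the automorphism makes the symmetry transparent.)

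To finish, I would produce a pair of distinct canonical-form elements with equal leading term. Since $C\subsetneq\{0,1\}^n$ there is some $v\notin C$, so $\rho_v$ is a positive-degree pseudo-monomial in $J_C$, and dividing it down yields a minimal one; thus $\mathrm{CF}(J_C)$ is nonempty, and (taking $C\neq\emptyset$, the empty code being a degenerate exception where $J_C=(1)$ and $\mathrm{CF}(J_C)=\{1\}$) every element of $\mathrm{CF}(J_C)$ has positive degree because $1\notin J_C$. Fix $f=x_\sigma\prod_{i\in\tau}(1+x_i)\in\mathrm{CF}(J_C)$. Then $\overline f=x_\tau\prod_{j\in\sigma}(1+x_j)\in\mathrm{CF}(J_C)$, and $\overline f\neq f$, since $\overline f=f$ would force $\sigma=\tau$ and hence $\sigma=\tau=\emptyset$ (as $\sigma\cap\tau=\emptyset$), contradicting $\deg f>0$. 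By Lemma~\ref{lem:pseudo}, for every monomial ordering $\mathrm{LT}(f)=x_{\sigma\cup\tau}=x_{\tau\cup\sigma}=\mathrm{LT}(\overline f)$. So $\mathrm{CF}(J_C)$ contains two distinct elements with a common leading term, hence is not a reduced Gr\"obner basis, and by Proposition~\ref{thm:reduced} it is not a Gr\"obner basis of $J_C$ at all.

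The main obstacle is the middle step: one must be certain that $\phi$ genuinely carries $\mathrm{CF}(J_C)$ to itself, which relies on the two facts that $\phi$ maps pseudo-monomials to pseudo-monomials and that $\phi(J_C)=J_C$ — the latter being precisely where complement-completeness is used. Everything else (the identity $\phi(\rho_v)=\rho_{\overline v}$, reading off $\mathrm{LT}(f)$ from Lemma~\ref{lem:pseudo}, and checking $f\neq\overline f$) is routine.
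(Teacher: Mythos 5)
Your proof is correct and reaches the same final contradiction as the paper (a pair $f,\overline f\in\CF(J_C)$ with a common leading term, so $\CF(J_C)$ is not reduced, and then Proposition~\ref{thm:reduced} applies), but the mechanism you use to show that $\CF(J_C)$ is closed under complementation is genuinely different and, in my view, cleaner. The paper first proves a standalone Claim --- that $h\in J_C$ a pseudo-monomial implies $\overline h\in J_C$ --- via a gcd argument: it collects the degree-$n$ multiples of $h$, identifies them as characteristic functions $\rho_v$ with $v\notin C$, uses complement-completeness to conclude the complemented set still lies in $J_C$, and recovers $\overline h$ as a gcd. It then combines this Claim with Lemma~\ref{lem:bar} to transfer minimality. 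You instead observe that the map $\phi\colon x_i\mapsto 1+x_i$ is a ring involution of $\biring$ satisfying $\phi(\rho_v)=\rho_{\overline v}$, so complement-completeness gives $\phi(J_C)=J_C$ in one line; since $\phi$ is an automorphism it automatically preserves pseudo-monomials, divisibility, membership in $J_C$, and hence minimality, giving $\phi(\CF(J_C))=\CF(J_C)$ with no separate gcd argument and no explicit appeal to Lemma~\ref{lem:bar}. The automorphism viewpoint buys you a more conceptual and shorter proof of the closure step; the paper's gcd argument is more elementary and self-contained. One additional point in your favor: you explicitly verify $f\neq\overline f$ (noting $f=\overline f$ would force $f=1$, excluded since $1\notin J_C$ once $C\neq\emptyset$), a detail the paper's write-up leaves implicit, and you flag the degenerate case $C=\emptyset$ where $J_C=(1)$ and $\CF(J_C)=\{1\}$ actually is a Gr\"obner basis --- a case the stated hypothesis $C\subsetneq\{0,1\}^n$ does not exclude, so calling it out is warranted.
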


\begin{proof}
Note that since $C \neq \{0,1\}^n$, $J_C$ is not trivial. We make the following claim: \\
{\sc Claim:} If $h$ is a pseudo-monomial in $J_C$, then $\overline{h}$ is also in $ J_C$.\\
To see this, 
let $S$ be the set of all degree-$n$ pseudo-monomials in $\mathbb{F}_2[x_1,\dots, x_n]$ that are multiples of $h$ (so, $S \subseteq J_C$). Degree-$n$ pseudo-monomials in $\mathbb{F}_2[x_1,\dots, x_n]$ are characteristic functions $\rho_v$, so, every element of $S$ is some $\rho_v$, where $v \notin C$.  Thus, every element of $\overline{S}:=\{\overline f \mid f \in S \}$ has the form $ \overline{\rho_v} = \rho_{\overline{v}} $, where $v \notin C$, which is equivalent to $\overline{v} \notin C$, as $C$ is complement-complete.  So, $\overline{S} \subseteq J_C$.  

Next, let $s \in S$, that is, $s = hq$ for some pseudo-monomial $q$.
Then $h\overline{q}$ is also in $S$. Since $\gcd(q,\overline{q}) = 1$, it follows that $h = \gcd(hq, h \overline{q})$, so $h = \gcd\{S\}$. Thus, $\overline{h} = \gcd\{\overline{S}\}$, so $\overline{h} \in J_C$
(because $\overline{S} \subseteq J_C$), which proves the claim.

Now let $f \in CF(J_C)$.
By the claim, $\overline{f}$ is in $J_C$, and now we assert that, like $f$, the pseudo-monomial $\overline{f}$ is in $CF(J_C)$. 
Indeed, if a pseudo-monomial $d$ in $J_C$ divides $\overline{f}$, then by Lemma~\ref{lem:bar}, the pseudo-monomial $\overline{d}$ divides $f$.  
Also, $\overline{d} \in J_C$ (by the claim), so 
$\overline{d}=f$ (because $f$ is minimal), and thus $d = \overline{f}$.  Hence, $\overline{f}$ is minimal, and so  $\overline{f}$ is also in $CF(J_C)$.  
Thus, $CF(J_C)$ contains two polynomials ($f$ and $\overline{f}$) with the same leading term, and so 
 is not a reduced Gr\"obner basis, and thus (by Proposition~\ref{thm:reduced}) is not a Gr\"obner basis of $J_C$. 
%
%
%
\end{proof}

\begin{example}  
Consider again the complement-complete code $C=\{1000,0111,1010,0101\}$ from Example \ref{ex:comp-compcode}.  The canonical form is $CF(J_C)=\{(x_1+1)(x_2+1),~(x_1+1)(x_4+1),~x_1x_2,~x_2(x_4+1),~x_1x_4,~x_4(x_2+1)\}$.  Note that 
$CF(J_C)$ is itself complement-complete;
for example, $f=x_2(x_4+1)$ and $\overline{f}=x_4(x_2+1)$ are both in $CF(J_C)$.  Also, we can show directly that $CF(J_C)$ is not a Gr\"obner basis, which is consistent with Proposition~\ref{prop:complement-complete}: with respect to any monomial ordering, the leading term of $f+\overline{f}=x_2+x_4$ 
is not divisible by any of the leading terms in $CF(J_C)$. 
\end{example}

\section{New receptive-field relationships} 
\label{sec:new-RF}

We saw earlier that if the universal Gr\"obner basis of a neural ideal consists of only pseudo-monomials, then it equals the canonical form (Theorem~\ref{thm:summary}).  When this is not the case, there are non-pseudo-monomial elements in the universal Gr\"obner basis, so it is natural to ask what they tell us about the receptive fields of the code.  In other words, what types of RF relationships, besides those of Types 1--3 (Lemma~\ref{lem:receptivefields}), appear in Gr\"obner bases?  Here we give a partial answer:
\begin{thm} \label{thm:new-RF-relns}
	Let  $\mathcal{U}=\{U_i\}_{i=1}^n$ be a collection of 
    sets in a stimulus space~$X$.
	Let $C=C(\mathcal{U})$ denote the corresponding receptive field code, and let $J_C$ denote the neural ideal.
    Then for any subsets $\sigma_1,\sigma_2,\tau_1,\tau_2 \subseteq [n]$, and $m$ indices $1 \leq i_1 < i_2 < \dots < i_m \leq n$, 
    with $m \geq 2$,  
    we have RF relationships as follows:
\begin{itemize}
\item[Type 4:]
$x_{\sigma_1} \prod_{i \in \tau_1} (1 + x_i) + x_{\sigma_2} \prod_{j \in \tau_2} (1 + x_j) \in J_C$
~$\Rightarrow$~
$U_{\sigma_1}\cap \left( \bigcap_{i\in \tau_1}U_i^c \right) = U_{\sigma_2}\cap \left( \bigcap_{j\in \tau_2}U_j^c\right)$.
\medskip
\item[Type 5:] 
$x_{i_1}+ \dots +x_{i_m}  \in J_C$
~$\Rightarrow$~
$U_{i_k}\subseteq\bigcup_{j\in [m] \setminus \{k\} }U_{i_j}$ for all $k=1, \dots, m$,
and if, additionally, $m$ is odd, then  $\bigcap_{k=1}^m U_{i_k}=\emptyset$.
\medskip
\item[Type 6:] 
$x_{i_1}+ \dots +x_{i_m}+1  \in J_C$ 
~$\Rightarrow$~
$\bigcup_{k=1}^m U_{i_k}=X$.
\end{itemize}
\end{thm}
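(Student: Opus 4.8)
The plan is to obtain all of Types~4--6 from a single elementary observation: every polynomial of $J_C$ vanishes at every codeword, and the codewords are exactly the membership patterns of points of the stimulus space. Precisely, for a point $p \in X$ let $c(p) \in \{0,1\}^n$ be the codeword with $c(p)_i = 1$ if and only if $p \in U_i$; straight from the definition of $C(\mathcal{U})$, $c(p) \in C$ for every $p \in X$. Each generator $\rho_v$ of $J_C$ (with $v \notin C$) vanishes at every point of $\{0,1\}^n$ other than $v$, hence at every element of $C$; therefore so does every $f \in J_C$ (equivalently, $J_C \subseteq I(C)$ by Lemma~\ref{lemma_3.2}), i.e.\ $f(c(p)) = 0$ in $\mathbb{F}_2$ for all $p \in X$. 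Two evaluation facts will be used, both immediate from $c(p)_i \in \{0,1\}$: (a) for subsets $\sigma', \tau' \subseteq [n]$, the pseudo-monomial $x_{\sigma'} \prod_{i \in \tau'}(1 + x_i)$ equals $1$ at $c(p)$ precisely when $p \in U_{\sigma'} \cap \bigcap_{i \in \tau'} U_i^c$; and (b) the polynomial $x_{i_1} + \dots + x_{i_m}$ equals, modulo $2$, the integer $N(p) := \#\{ k \in [m] \mid p \in U_{i_k} \}$.

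For Type~4, set $A_1 := U_{\sigma_1} \cap \bigcap_{i \in \tau_1} U_i^c$ and $A_2 := U_{\sigma_2} \cap \bigcap_{j \in \tau_2} U_j^c$, and let $f$ be the given sum of the two pseudo-monomials. Fix $p \in X$. By fact~(a), the first summand equals $1$ at $c(p)$ exactly when $p \in A_1$, and the second exactly when $p \in A_2$; since $f(c(p)) = 0$ in $\mathbb{F}_2$, the two summands take equal values at $c(p)$, so $p \in A_1$ if and only if $p \in A_2$. As $p$ was arbitrary, $A_1 = A_2$, which is the claimed relation.

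For Types~5 and~6, let $g := x_{i_1} + \dots + x_{i_m}$ in the first case and $g := x_{i_1} + \dots + x_{i_m} + 1$ in the second. Fact~(b) gives, for each $p \in X$, $g(c(p)) \equiv N(p) \pmod{2}$ in the first case and $g(c(p)) \equiv N(p) + 1 \pmod{2}$ in the second; since $g$ vanishes on $C$, this says $N(p)$ is even for all $p$ (Type~5), respectively $N(p)$ is odd for all $p$ (Type~6). In the Type~6 case, $N(p) \geq 1$ for every $p$, which is exactly $\bigcup_{k=1}^m U_{i_k} = X$. In the Type~5 case, if $p \in U_{i_k}$ then $N(p)$ is even and at least $1$, hence at least $2$, so $p \in U_{i_j}$ for some $j \in [m] \setminus \{k\}$; thus $U_{i_k} \subseteq \bigcup_{j \in [m] \setminus \{k\}} U_{i_j}$. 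And if $m$ is odd, a point $p \in \bigcap_{k=1}^m U_{i_k}$ would have $N(p) = m$, which is odd, contradicting parity; hence $\bigcap_{k=1}^m U_{i_k} = \emptyset$.

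I do not anticipate a serious obstacle: once the point-to-codeword dictionary is in place, each item is a one-line parity argument. The two points that need a little care are (i) recording the one-directional containment $J_C \subseteq I(C)$ --- so that elements of $J_C$ vanish on $C$ --- and noting that this is why Types~4--6 come out as implications rather than equivalences, in contrast with the Type~1--3 relations of Lemma~\ref{lem:receptivefields}; and (ii) the bookkeeping that an $\mathbb{F}_2$-sum of $0/1$ values records a cardinality modulo $2$, which underlies fact~(b) and the two parity conclusions.
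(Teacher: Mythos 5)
Your proof is correct and takes essentially the same approach as the paper: both evaluate elements of $J_C$ at the codeword $c(p)$ of an arbitrary point $p\in X$, using that $J_C\subseteq I(C)$ and that a pseudo-monomial $x_{\sigma'}\prod_{i\in\tau'}(1+x_i)$ takes value $1$ at $c(p)$ iff $p\in U_{\sigma'}\cap\bigcap_{i\in\tau'}U_i^c$. Your reformulation of Types~5 and~6 via the parity of $N(p)$ is a mild repackaging of the paper's direct computation with the coordinates $c(p)_{i_k}$, not a different argument.
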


\begin{proof}  Throughout the proof, for $p \in X$, we let $c(p)$ 
denote the corresponding codeword in $C$.

{\sc Type 4.} 
Let $f_1:=x_{\sigma_1} \prod_{i \in \tau_1} (1+x_i)$, and let $f_2:=x_{\sigma_2} \prod_{j \in \tau_2} (1+x_j)$.  
Also, let 
$W_1 := U_{\sigma_1} \cap \left( \bigcap_{i\in \tau_1}U_i^c \right)$, and 
let $W_2 := U_{\sigma_2} \cap \left( \bigcap_{j\in \tau_2}U_j^c \right)$. 
By symmetry, we need only show that $W_1 \subseteq W_2$.  
To this end, let $p \in W_1$ (so, $c(p) \in C$).  First, because $f_1+f_2 \in J_C$ and $V(J_C)=C$, it follows that $f_1(c(p))=f_2(c(p))$.  Next, for $i=1,2$, we have $p \in W_i$ if and only if $f_i(c(p))=1$.  Thus, $p \in W_2$.

{\sc Type 5.} 
Let $g:= x_{i_1}+\dotsm +x_{i_m}$.  By symmetry, we need only show that $U_{i_1} \subseteq \bigcup_{l=2}^m U_{i_l}$.  
To this end, let $ p \in U_{i_1}$ (so, $c(p)_{i_1}=1$).  Then $g \in J_C$ implies the following equality in $\mathbb{F}_2$: 
\begin{align} \label{eq:g}
 0 ~=~ g(c(p)) ~=~ c(p)_{i_1} + c(p)_{i_2} + \dots + c(p)_{i_m} ~=~  1 + c(p)_{i_2} + \dots + c(p)_{i_m}~.
\end{align}
Thus, for some $k \geq 2$, we have $c(p)_{i_k}=1$, i.e., $p \in U_{i_k}$.  Hence, $p \in \bigcup_{l=2}^m  U_{i_l}$.

Now assume, additionally, that $m$ is odd.  Suppose, for contradiction, that there exists $q \in \bigcap_{k=1}^m U_{i_k}$.  Then, like the sum~\eqref{eq:g} above, we have $0=g(c(q))=1+\dots+1=m$, which contradicts the hypothesis that $m$ is odd.  So, $\bigcap_{k=1}^m U_{i_k} = \emptyset$.

{\sc Type 6.} 
Let $h:=x_{i_1}+ \dots +x_{i_m}+1$.  Let $p \in X$ (so, $c(p) \in C$).  We must show that $p \in \bigcup_{k=1}^m U_{i_k}$.  Because $h \in J_C$, we have 
$0 = h(c(p)) = c(p)_{i_1} + \dots + c(p)_{i_m} + 1$.  Thus, for some $k \in [m]$, we have $c(p)_{i_k}=1$, i.e., $p \in U_{i_k}$.  Hence, $p \in \bigcup_{k=1}^m  U_{i_k}$.
\end{proof}

\begin{remark} \label{rem:equality}
Like the earlier RF relationships (those of Types 1--3 from Lemma~\ref{lem:receptivefields}),
some of our new ones (Types 4--6) are containments and some are equalities.
\end{remark}

\begin{example} \label{ex:type-6}
Recall the code $C = \{0101,1100,1110\}$, 
from Example~\ref{ex:not-universal-code}, for which the universal Gr\"obner basis of $J_C$ is 
$\widehat{G} = \{x_4 x_3,~ x_3  (x_1 + 1),~ x_1 + x_4 + 1,~ x_2 + 1\}$.  The polynomial $x_1+x_4+1$ encodes a Type 6 relationship: $U_1\cup U_4=X$. Also, the polynomial $x_2+1$ encodes a Type 3 relationship: $U_2=X$, which together gives us $U_1\cup U_4=U_2$.  The canonical form also contains the polynomial $x_1x_4$, which encodes a Type 1 relationship: $U_1\cap U_4=\emptyset$.  We conclude that 
$U_1 \dot \cup U_4=U_2$.

\end{example}

\begin{example} \label{ex:types-4-5}
Consider the code $C=\{00,11\}$.  The universal Gr\"obner basis of $C$ is $\widehat{G} = \{ x_1(1+x_1),~ x_1+x_2, ~x_2(1+x_2) \}$.  The polynomial $x_1+x_2$ encodes a Type 4 relationship: $U_1=U_2$.   (The polynomial $x_1+x_2$ also encodes Type 5 relationships.)
This points to one of the advantages of our new RF relationships:
we can read off some set equalities more quickly than from the canonical form.  Indeed, the canonical form is ${\rm CF}(J_C)=\{x_1(1+x_2),~x_2(1+x_1) \}$, in which the Type 2 relationships are $U_1 \subseteq U_2$ and $U_2 \subseteq U_1$ -- and only from there do we infer the equality  $U_1=U_2$.
\end{example}

\section{Discussion} \label{sec:discussion}
In this work, we proved that if a code's canonical form is a Gr\"obner basis of the neural ideal, then it is the universal Gr\"obner basis.  Additionally, we gave conditions that guarantee or preclude this situation, and found three new types of receptive-field relationships that arise in 
neural ideals.  Going forward, there are natural extensions to pursue:
\begin{enumerate}
\item Give a complete characterization of codes for which the canonical form is a Gr\"obner basis.
\item Beyond those of Types 1--6, what other receptive-field relationships can be read off from a Gr\"obner basis, 
	and what do they tell us about a code?
\end{enumerate}
Solutions to these problems would help us extract information about the receptive-field structure directly from the neural code.

Finally, we expect that our results can be used to improve canonical-form algorithms. 
Indeed, our experiments indicate that under certain conditions, Gr\"obner bases can be computed more efficiently than canonical forms.
Moreover, every pseudo-monomial in the universal 
Gr\"obner basis of a neural ideal is in the canonical form -- so, that subset of the canonical form can be obtained directly from the universal Gr\"obner basis.  And, in the case when the universal Gr\"obner basis contains only pseudo-monomials, then we can conclude immediately that the basis is in fact the canonical form. 
Moreover, we hope to develop decomposition results to build canonical forms and Gr\"obner basis of codes in large dimensions by `gluing' those of codes in smaller dimensions.

\subsection*{Acknowledgments}
{ 
DM, RK, and EP conducted this research as part of the 2015 Pacific Undergraduate Research Experience in Mathematics Interns Program funded by the NSF (DMS-1045147 and DMS-1045082) and the NSA (H98230-14-1- 0131), in which RG and LG served as mentors and KP was a GTA.   
JL conducted this research as part of the 2016 NSF-funded REU in the Department of Mathematics at Texas A\&M University (DMS-1460766), in which AS served as mentor and KP was a GTA.  The authors thank Ihmar Aldana, Carina Curto, Vladimir Itskov, and Ola Sobieska for helpful discussions.  LG was supported by the Simons Foundation Collaboration grant 282241.
AS was supported by the NSF (DMS-1312473/DMS-1513364). The authors thank an anonymous referee for helpful comments which improved this work.
}


\bibliographystyle{plain}
\bibliography{neuro}

\end{document}